  \newtheorem{theorem}{Theorem}
  \newtheorem{lemma}{Lemma}
\newtheorem{corollary}{Corollary}
\newtheorem{remark}{Remark}
\providecommand{\customgenericname}{}
\newcommand{\newcustomtheorem}[2]{%
  \newenvironment{#1}[1]
  {%
   \renewcommand\customgenericname{#2}%
   \renewcommand\theinnercustomgeneric{##1}%
   \innercustomgeneric
  }
  {\endinnercustomgeneric}
}
\def\old@comma{,}
    \old@comma\discretionary{}{}{}%
\definecolor{darkblue}{rgb}{0.1,0.1,0.8}
\definecolor{DarkGreen}{rgb}{0,0.6,0}
\definecolor{brickred}{rgb}{0.8, 0.25, 0.33}
\definecolor{britishracinggreen}{rgb}{0.0, 0.26, 0.15}
\definecolor{calpolypomonagreen}{rgb}{0.12, 0.3, 0.17}
\definecolor{ao(english)}{rgb}{0.0, 0.5, 0.0}
	\definecolor{cadmiumgreen}{rgb}{0.0, 0.42, 0.24}
\definecolor{burgundy}{rgb}{0.5, 0.0, 0.13}
\newcommand{\addv}[3]{%
	\iftoggle{Track}{%
    	\IfEqCase{#1}{%
       	 	{a}{\ifthenelse{\equal{#2}{ON}}{{\color{cadmiumgreen}#3}}{#3}}%
        	{b}{\ifthenelse{\equal{#2}{ON}}{{\color{brickred}#3}}{#3}}%
       		{c}{\ifthenelse{\equal{#2}{ON}}{{\color{burgundy}#3}}{#3}}%
    	}[\PackageError{tree}{Undefined option to tree: #1}{}]%
	}{#3}%
}
\newcounter{relctr} 
\everydisplay\expandafter{\the\everydisplay\setcounter{relctr}{0}} 
\global\long\def\RR{\mathbb{R}}
\global\long\def\NN{\mathbb{N}}
\global\long\def\EE{\mathbb{E}}
\global\long\def\11{\mathbbm{1}}
\newcommand{\bfa}{\mathbf{a}}
\newcommand{\bfb}{\mathbf{b}}
\newcommand{\bfs}{\mathbf{s}}
\newcommand{\CM}{\mathcal{M}}
\global\long\def\+{\oplus}
\newcommand\pmm{\{-1,1\}}
\def\<{\langle}
\def\>{\rangle}
  \renewcommand{\var}{\mathsf{var}}
  \newcommand{\var}{\mathsf{var}}
 \newcommand{\abs}[1]{\lvert#1\rvert}
 \newcommand{\norm}[1]{\lVert#1\rVert}
  \renewcommand{\set}[1]{\left\{#1\right\}}
  \newcommand{\set}[1]{\left\{#1\right\}}
  \newcommand{\poly}{\optfont{poly}}
\DeclareMathOperator*{\tensor}{\otimes}
\providecommand{\tr}{tr}
  \renewcommand{\Tr}[1]{\tr \Big\{#1\Big\}}
  \newcommand{\Tr}[1]{\tr \Big\{#1\Big\}}
 \def\id{I_d}
\def\Loss{\mathcal{L}}
\def\parameter{\overrightarrow{a}}
\def\qps{\sigma^\bfs}
\newcommand{\optfont}[1]{\mathsf{#1}}
\newacro{ptp}[PtP]{Point-to-Point}
\newacro{iid}[i.i.d.]{independent and identically distributed} 
\newacro{IID}[i.i.d.]{independent and identically distributed} 
\newacro{PAC}[PAC]{\textit{probably approximately correct}}
\newacro{VC}[VC]{Vapnik–Chervonenkis}
\newacro{ERM}[ERM]{\textit{empirical risk minimization}}
\newacro{SVM}[SVM]{support-vector machine}
\newacro{SGD}[SGD]{stochastic gradient descent}
\newacro{POVM}[POVM]{positive operator-valued measure}
\newacro{QPAC}[QPAC]{\textit{quantum probably approximately correct}}
\newacro{QSRM}[QSRM]{\textit{quantum shadow risk minimization}}
\newacro{QC}[QC]{quantum computer}
\newacro{ML}[ML]{machine learning}
\newacro{QML}[QML]{quantum machine learning}
\newacro{NISQ}[NISQ]{noisy intermediate-scale quantum}
\newacro{VQA}[VQA]{variational quantum algorithm}
\newacro{QNN}[QNN]{quantum neural network}
\newacro{PQC}[PQC]{parameterized quantum circuit}
\newacro{RQSGD}[RQSGD]{randomized quantum stochastic gradient descent}
\newacro{QSGD}[QSGD]{quantum shadow gradient descent}
\newacro{QGD}[QGD]{quantum gradient descent}
\newacro{QSS}[QSS]{quantum shadow sampling}
\newacro{CST}[CST]{classical shadow tomography}
\newacro{PSR}[PSR]{parameter shift rule}
\def\STATE{\State}
\def\FOR{\For}
\def\ENDFOR{\EndFor}
\def\cite{\citep}
\title{Quantum Shadow Gradient Descent for Variational Quantum Algorithms}
\author{
\name     Mohsen Heidari\textsuperscript{\rm 1},
    Mobasshir A Naved\textsuperscript{\rm 2},
    Zahra Honjani\textsuperscript{\rm 1},
    Wenbo Xie\textsuperscript{\rm 2},
      Arjun Jacob Grama\textsuperscript{\rm 2},
          Wojciech Szpankowski\textsuperscript{\rm 2},\\
\addr     \textsuperscript{\rm 1}Department of Computer Science, Indiana University, Bloomington\\
    \textsuperscript{\rm 2}Department of Computer Science, Purdue University, West Lafayette\\
    \{mheidar, zhonjani\} @iu.edu, \{naved, xie401, agrama, szpan\} @purdue.edu
       }
\begin{document}

\maketitle

\begin{abstract}
Gradient-based optimizers have been proposed for training variational quantum circuits in settings such as quantum neural networks (QNNs). The task of gradient estimation, however, has proven to be challenging, primarily due to distinctive quantum features such as state collapse and measurement incompatibility. Conventional techniques, such as the \textit{parameter-shift rule}, necessitate several fresh samples in each iteration to estimate the gradient due to the stochastic nature of state measurement. 
Owing to state collapse from measurement, the inability to reuse samples in subsequent iterations motivates a crucial inquiry into whether fundamentally more efficient approaches to sample utilization exist.  In this paper, we affirm the feasibility of such efficiency enhancements through a novel procedure called quantum shadow gradient descent (QSGD), which uses a {\em single sample per iteration} to estimate {\em all} components of the gradient. Our approach is based on an adaptation of shadow tomography that significantly enhances sample efficiency. Through detailed theoretical analysis, we show that QSGD has a significantly faster convergence rate than existing methods under locality conditions.  
We present detailed numerical experiments supporting all of our theoretical claims.

\end{abstract}

\section{Introduction}

\Ac{QML} and optimization stand out as important applications of \acp{QC} in the context of classical and quantum data with a diverse range of \textit{near-term} applications, including quantum simulation \cite{Kassal2011,McArdle2020,Hempel2018,Cao2019,Bauer2020}, quantum sensing,  phase-of-matter detection \cite{Carrasquilla2017,Broecker2017}, ground-state search \cite{Carleo2017,Broughton2020,Biamonte2017}, and entanglement detection \cite{Ma2018,massoli2021leap,Lu2018,Hiesmayr2021,Chen2021a,Deng2017}. 
A significant bottleneck in the utility of QML  for classical data  is the process of converting classical data into a quantum form, suitable for processing in quantum systems, and the practical implementation of Quantum Random Access Memories (QRAM) \cite{Giovannetti2008}.   However, this bottleneck can be bypassed when the data is inherently quantum upon collection as in quantum sensing applications. This naturally quantized data can then be directly utilized in QML algorithms, avoiding the overhead associated with classical-quantum encoding. Using quantum states as input, QML can leverage unique properties of quantum to achieve a greater accuracy and efficiency compared to classical methods. However, to fully harness the potential of QML with quantum data, it is essential to design training methods that optimize the relevant model parameters. This involves developing strategies to maximize the accuracy of QML approaches by leveraging the unique properties of quantum data.


\Acp{VQA} play a critical role for training of QML models, such as  \acp{QNN}, using a hybrid quantum-classical loop. 
Gradient-based approaches are commonly used as optimizers in such models \citep{Mitarai2018,Schuld2018,farhi2014quantum}. Despite their success in classical problems, these methods pose significant challenges due to the unique properties of quantum systems. Specifically, estimation of the gradient is a demanding task due to the state collapse, the information loss of quantum measurements and measurement \textit{incompatibility}.  As such, existing methods such as \textit{Parameter-shift rule} and finite differencing \cite{Mitarai2018,Schuld2018} share a common limitation --- they are sample-intensive, requiring fresh samples for estimating each component of the gradient. The inefficiencies are  exacerbated when quantum data is expensive to produce. For instance when state preparation has high gate complexity or quantum sensing is time consuming \footnote{An explicit example is when when quantum samples are the output of quantum dynamical process described by a a two-dimensional Fermi-Hubbard model on an $8 \times 8$ lattice --- a process that requires $10^7$ Toffoli gates \cite{Kivlichan2020}.}.

Efficient gradient estimation is therefore  crucial for the practical implementation of QML algorithms in resource-constrained or cost-sensitive applications. In this paper, we propose a new method to efficiently estimate the gradient for QML applications with quantum data.  We provide a through analysis of the convergence rate and show that our approach outperforms existing methods. Our approach consumes a single  quantum sample and updates all parameters simultaneously. 

\subsection{Summary of the main contributions}
This paper proposes a sample-efficient gradient-based method called \ac{QSGD} for training in \acp{VQA} (Section \ref{subsec:QSGD}). \Ac{QSGD}  consumes only one sample per each iteration and simultaneously updates all the parameters. It has the benefits of a one-shot approach in being sample efficient, while having a faster convergence rate compared to existing works. 
We provide comprehensive theoretical analysis for all claims and verify our results through numerical experiments. 

Our approach introduces a new technique for gradient estimation via \ac{CST}, see Section \ref{subsec:shadow}. Shadows from quantum states has been previously studied in the context of quantum state tomography \cite{Aaronson2018,Huang2020}.  In our work, we  utilize  this concept to the broader context of optimization and learning, propose a gradient-based optimizer and analyze its convergence rate in generic VQA setups, including non-convex landscapes, see Theorem \ref{thm:convergence}. 

Particularly, we show that for generic loss functions and \textit{local} quantum systems,  QSGD exhibits faster convergence compared to the \ac{PSR} and RQSGD \cite{HeidariAAAI2022}. Conventionally, we analyze the norm of the gradient to measure the convergence rate. Consider training of an ansatz with $p$ parameters  using  gradient-based update rule and with fixed-step size. We show that the average of the norm of the gradient is bounded by $\epsilon + O( \frac{C}{n \epsilon})$, 
where $n$ is the sample size and $C$ is the method-dependent constant characterized as 
\begin{align*}
C_{QSGD} = p^{3/2} 3^k, & & C_{PSR} = 4 p^{5/2}, & &  C_{RQSGD} = p^{5/2}
\end{align*}
where  $k$ is the ansatz locality. 

 Notably, our results show that QSGD achieves accelerated convergence for a $k$-local ansatz with $k=\Omega(\log p)$, where $p$ is the number of parameters --- a scenario commonly encountered in highly parameterized local ans\"atz. The term $k$-locality refers to operations that act non-trivially on $k$ out of $d$ qubits, a well-explored concept in quantum literature. For instance, the $k$-local Hamiltonian problem falls within the QMA complexity class (the quantum analog of NP) and shares similarities with the MAX $k$-SAT problem \cite{Kempe2004,Kitaev2002}.

Finally, we present numerical experiments in Section \ref{sec:numerical}, which support our theoretical findings and showcase the benefits of QSGD.

\section{Preliminaries and Model}

\noindent\textbf{Notation:}  For any $d\in \NN$, we denote 
the set $\set{1,2,...,d}$ by $[d]$. 
 A quantum measurement $\mathcal{M}$ is  a \ac{POVM} represented by a set of operators $\mathcal{M}:=\{M_v, v\in\mathcal{V}\}$, where $\mathcal{V}$ is the set of possible outcomes, $M_v\geq 0$ for any $v\in \mathcal{V}$, and $\sum_{v\in \mathcal{V}} M_v =\id.$ The Pauli operators, along with the identity operator are denoted by $\sigma^0:=I, \sigma^1:=X, \sigma^2:=Y,$ and $\sigma^3:=Z$. The $d$-fold tensor products of these operators, called Pauli strings, are denoted by $\qps:=\tensor_{j}\sigma^{s_j}$, where $\bfs\in \set{0,1,2,3}^d$.    

\subsection{Variational Quantum Algorithms}
\acp{VQA} consists of an ansatz which is  a \ac{PQC} followed by a hybrid quantum-classical loop for optimisation. The ansatz is denoted by a parameterized unitary  $U(\overrightarrow{a})$, with $\overrightarrow{a}\in \RR^p$ being the vector of adjustable model parameters, and $p$ the number of the parameters. An important example of an ansatz is a \ac{QNN}, in which $U$ consists of several layers of parameterized quantum perceptrones.  
The PQC consists of  multiple layers of variational elementary gates as 
\begin{align}\label{eq:ansatz}
  U(\overrightarrow{a}) = U_L(\overrightarrow{a}_L) V_L \cdots U_1(\overrightarrow{a}_{1})V_1,
  \end{align}
where $L$ is the number of layers, and  $V_l$ is a fixed part followed by a parametric part $U_l$ given by : 
\begin{equation}\label{eq:ansatz U_l}
    U_l(\overrightarrow{a}_l):= \bigotimes_{\bfs_l} e^{i a_{\bf s_l} \sigma^{\bfs_l}}
\end{equation}
 with $\sigma^{\bfs}$ being a Pauli string.  Hardware-efficient ans\"atze are examples of such \acp{PQC} 
The fixed layers often consist of CNOT gates.  The objective of VQA is to minimize a predefined loss function characterized via an \textit{observable} ${O}$: 
\begin{equation}\label{eq:VQA cost function}
     c(\parameter) : = \tr{{O} ~ U(\overrightarrow{a}) \rho  U(\overrightarrow{a})^\dagger},
\end{equation}
where $\rho$ is a mixed state representing the input to the ansatz. The above formulation reveals a significant challenge for VQAs -- the cost function $c(\parameter)$ is in the form of an expectation value, which can only be measured via the observable $O$.  However, the mere act of measuring the loss causes state collapse, generally rendering it useless for further processing. Consequently, one expects that quantum optimization and learning problems are more sample intensive compared to their classical analogs. This is particularly problematic when the quantum states are expensive to produce.  

\noindent\textbf{Quantum Learning.}
An imperative application of VQAs is the ability to infer from quantum states specially in the context of \ac{QML}. QML is defined for both originally classical or quantum data. In the first case, classical information is encoded into qubits \cite{Biamonte2017} via various encoding procedures. Typically the classical-quantum encoding is a major bottleneck from computational perspective.  In the second case, the data is originally quantum obtained trough various methods including quantum sensors and therefore there is no encoding complications. The ability to process quantum data opens the   possibility for advancing our understanding of the kinds of quantum advantages over classical models.  This possibility comes with unique challenges stem from various factors such as quantum state collapse and the no-cloning principle. Such features prohibit reuse of quantum samples and hence ask for  QML solutions that abide such properties. The focus of this paper is on addressing such concerns.    

We consider a  quantum supervised learning suited for quantum state discrimination. In that scenario, one aims to accurately predict a classical property (a.k.a., label) of an unknown quantum state by applying an optimized ansatz with a measurement at the end.  If $\hat{y}$ is the ansatz's prediction of the true label $y$, a loss value of $l(y, \hat{y}) >0$ will be incurred. The goal is to minimize the generalization loss, as a function of $\parameter$, which is the expectation $\Loss(\parameter):=\EE[l(Y, \hat{Y})]$ over all the randomness involved in the problem. To supervise the learner, a set of $n$ labeled  samples $(\ket{\phi_i}, y_i), i\in [n]$  is available for training the ansatz. Typically, it is assumed that the samples are generated \ac{iid} according to an unknown but fixed probability distribution $D$. The learning algorithm aims to minimize the average loss for the training samples.  However, even the training loss is challenging to compute as it is an expectation value.  This is due to Born's law for quantum measurements. To see this, let $\{M_{\hat{y}}: \hat{y}\in \mathcal{Y}\}$ represent the measurement at the end of the ansatz. Then, the loss for the fixed sample $(\ket{\phi_t}, y_t)$ is:
\begin{equation}\nonumber
\Loss( \parameter^{(t)}, \ket{\phi_t}, y_t)  =  \sum_{\hat{y}} \ell(y, \hat{y})  \expval{U^\dagger(\overrightarrow{a}) {M}_{\hat{y}}U(\overrightarrow{a})}{\phi_t}. 
\end{equation}
Moreover, the average training loss is $\frac{1}{n}\sum_{t=1}^n \Loss( \overrightarrow{a}, \ket{\phi_t}, y_t).$ While it is desirable to minimize expected loss, this goal is not feasible since the training loss is an expectation value. In this context, designing sample-efficient optimizers is essential to real-world applications of QML. 

\subsection{Gradient-based optimizers}\label{sec:gradient optimizers}
Making iterative progress in the direction of the steepest descent is one of the most popular optimization techniques in \acp{VQA}, as it has been in classical problems.  
Ideally, if the per-sample expected loss $\Loss(\overrightarrow{a}, \ket{\phi_t}, y_t)$ were known, one would apply a standard gradient descent method  via an update rule of the form:
\begin{align}\label{eq:QSGD ideal}
\overrightarrow{a}^{(t+1)} = \overrightarrow{a}^{(t)} - \eta_t \nabla \Loss(\overrightarrow{a},\ket{\phi_t}, y_t).
\end{align}
This update rule is infeasible in practice, as the exact value of the gradient is unknown. This is because $\Loss(\overrightarrow{a},\rho_t, y_t)$ is an expectation value, and also, the superposition coefficients of $\ket{\phi_t}$ are unknown. Computing these coefficients requires infinitely many measurements, which is infeasible;  hence, one can only estimate/ approximate the gradient. 

\noindent\textbf{Gradient Estimation.}
There have been several approaches to estimate the gradient \cite{Farhi2018,Mitarai2018,Sweke2020,HeidariAAAI2022,Harrow2021,Schuld2018,Mitarai2019,Wiersema2023}. The zeroth-order approach (e.g., finite differences) evaluates the objective function in the neighborhood of the parameters.  Although it is a generic approach, recent studies showed their drawbacks in terms of convergence rate \cite{Harrow2021}.  First-order methods (e.g., parameter shift rule)  directly calculate the partial derivatives \cite{Schuld2018}. When the ansatz is of the form $U(\parameter) = \prod_{j=1}^p e^{-i a_j G_j}$, with $G_j$ the parameter shift rule implies that: 
\begin{equation*}
\pdv{\Loss}{a_j} = \<\Loss(\parameter + \frac{\pi}{4}e_j)\> -\<\Loss(\parameter - \frac{\pi}{4}e_j)\>, 
\end{equation*}
where $e_j\in \RR^p$ is the $j$th canonical vector, that is $e_{j,j}=1$ and $e_{j,r}=0$ for all $r\neq j$.  It is shown that this expression can be directly measured via a Hadamard test \cite{Mitarai2018}. In this way, each partial derivative is estimated by measuring $m$ fresh samples, where $m$ is a hyper parameter to be optimized. In this case, with $n$ total samples, the parameter shift rule will run for $T=\frac{n}{mp}$ iterations. Through standard statistical analysis, one can show that $\order{\frac{p}{\epsilon^2}\log p}$ samples are needed to estimate the gradient up to an additive error $\epsilon$.

A recently proposed alternative method, RQSGD updates a single parameter in each iteration \cite{HeidariAAAI2022}. For this, a single sample is measured via a Hadamard test to obtain a one-shot stochastic estimate of the selected partial derivative. In other words, the update rule is of the form:
$a_j^{(t+1)} = a_j^{(t)} - \eta_t  g^{(t)}_j,$
where $a_j$ is the selected parameter at iteration $t$, and $g^{(t)}_j$ is the estimate with $\EE[g^{(t)}_j] = \pdv{\Loss}{a_j}$. This update rule can be viewed as a quantum analog of the gradient coordinate descent \cite{Nesterov2012}. RQSGD offers high sample efficiency and low overhead. With $n$ samples, it runs for $T=n$ iterations, hence more updates of the ansatz. While this method is highly sample efficient, it  suffers from a slow convergence due to the serialization of updates. Hence, a key question is whether the sample efficiency can be maintained without compromising the convergence rate.

\noindent\textbf{No-cloning and measurement incompatibility.} 
Measurement incompatibility refers to the fact that certain properties of a quantum system cannot be simultaneously measured with arbitrary precision \citep{MichaelA.Nielsen2010}. This is a hurdle in the gradient estimation, as the gradient consists of several partial derivatives that might not be simultaneously measurable. As a result, existing methods  need fresh samples for estimating each partial derivative. This contributes to a  low sample efficiency that is challenges training of QML models with large number of parameters. This estimation deficiency is further exacerbated for quantum data,  as new quantum samples will be needed in view of the no-cloning.     We address this important issue using shadow tomography that enables high sample efficiency while having a faster convergence rate.

\section{Main Results}


This section introduces our method, \ac{QSGD}, and provides theoretical underpinnings for its correctness and performance. First, in Section \ref{subsec:shadow gradient}, we study the gradient of the loss for product  ansatz. 
Next, in Section \ref{subsec:shadow},  we present our QSGD training technique and describe our estimation method. Lastly, in Section \ref{subsec:convergence}, we derive convergence results (Theorem \ref{thm:convergence}). 



\subsection{Gradient of the loss}\label{subsec:shadow gradient}

We start by taking the derivative of the loss with respect to the parameters of the ansatz given in \eqref{eq:ansatz}.  
For any $l\in [L]$, let $U_{\leq l}: = \prod_{r=1}^l U_r(a_r) V_r$ denote effective unitary till layer $l$. Similarly, let $U_{>l}:=\prod_{r>l} U_r(a_r)V_r$ denote the products after $l$. The derivative of the loss with respect to $a_l$ is then computed as follows:
    \begin{lemma}\label{lem:loss derivative}
Let ${\rho}_l^{out} =U_{\leq l}\ketbra{\phi} U_{\leq l}^\dagger$ denote the density operator of the output state at layer $l$ when the input is $\ket{\phi}$ with label $y$. Then, the derivative of the loss is given by: 
\begin{align}\label{eq:losss derivative}
\frac{\partial  \Loss
}{\partial a_{\bfs_l}} = \sum_{\hat{y}} i \ell(y,\hat{y})  \tr{M_{\hat{y}} U_{>l} \big[ \sigma^{\bfs_l}, {\rho}_l^{out}\big]U^\dagger_{>l}},
\end{align}
where $[\cdot, \cdot]$ is the commutator operation.
\end{lemma}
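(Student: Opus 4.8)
The plan is to factor the full ansatz as $U(\parameter) = U_{>l}\,U_{\leq l}$, which is exactly the decomposition induced by the definitions of $U_{\leq l}$ and $U_{>l}$, and rewrite the per-sample loss in terms of the intermediate state $\rho_l^{out} = U_{\leq l}\ketbra{\phi}{\phi}U_{\leq l}^\dagger$. Concretely, $\Loss = \sum_{\hat y}\ell(y,\hat y)\,\Tr{M_{\hat y}\,U_{>l}\,\rho_l^{out}\,U_{>l}^\dagger}$. The parameter $a_{\bfs_l}$ enters only through the factor $U_l$ sitting inside $U_{\leq l}$; the operators $M_{\hat y}$, the tail unitary $U_{>l}$, and the weights $\ell(y,\hat y)$ are all independent of $a_{\bfs_l}$. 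Hence by linearity of the trace I can push the derivative all the way inside to get $\partial_{a_{\bfs_l}}\Loss = \sum_{\hat y}\ell(y,\hat y)\,\Tr{M_{\hat y}\,U_{>l}\,\bigl(\partial_{a_{\bfs_l}}\rho_l^{out}\bigr)\,U_{>l}^\dagger}$, so the entire lemma reduces to computing $\partial_{a_{\bfs_l}}\rho_l^{out}$.

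Next I would differentiate the single parametric gate. Writing $U_l(\parameter_l)=\bigotimes_{\bfs_l} e^{i a_{\bfs_l}\sigma^{\bfs_l}}$, the factors indexed by distinct $\bfs_l$ act on disjoint qubits, so only one tensor factor depends on $a_{\bfs_l}$; moreover a Pauli string commutes with its own exponential. These two facts give $\partial_{a_{\bfs_l}}U_l = i\,\sigma^{\bfs_l}U_l = i\,U_l\,\sigma^{\bfs_l}$, where $\sigma^{\bfs_l}$ is read as its embedding (tensored with identity on the remaining qubits) into the full Hilbert space; taking the adjoint and using Hermiticity of $\sigma^{\bfs_l}$ yields $\partial_{a_{\bfs_l}}U_l^\dagger = -i\,U_l^\dagger\sigma^{\bfs_l}$. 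This single computation is the only genuine content of the proof.

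I would then isolate $a_{\bfs_l}$ by writing $U_{\leq l} = U_l\,W$ with $W := V_l\,U_{\leq l-1}$ collecting everything strictly before the $l$-th parametric layer, so $W$ is independent of $a_{\bfs_l}$. With $\rho_l^{out} = U_l\,\bigl(W\ketbra{\phi}{\phi}W^\dagger\bigr)\,U_l^\dagger$, the product rule together with the two gate derivatives gives $\partial_{a_{\bfs_l}}\rho_l^{out} = i\,\sigma^{\bfs_l}\rho_l^{out} - i\,\rho_l^{out}\sigma^{\bfs_l} = i\,[\sigma^{\bfs_l},\rho_l^{out}]$. Substituting this into the reduced expression for $\partial_{a_{\bfs_l}}\Loss$ and pulling the scalar $i$ out of the trace produces exactly \eqref{eq:losss derivative}.

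I expect the main obstacle to be purely bookkeeping rather than analytic: keeping the layer-ordering convention of $U_{\leq l}=\prod_{r=1}^{l}U_r V_r$ consistent with the factorization $U=U_{>l}U_{\leq l}$, and being explicit that $\sigma^{\bfs_l}$ denotes the full-space embedding so that the commutator $[\sigma^{\bfs_l},\rho_l^{out}]$ is well defined. No convergence or differentiability issues arise, since each gate is entire in $a_{\bfs_l}$ and term-by-term differentiation of the finite product is automatically valid.
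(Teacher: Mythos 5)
Your proposal is correct and follows essentially the same route as the paper's proof: factor $U(\parameter)=U_{>l}U_{\leq l}$, use $\partial_{a_{\bfs_l}}U_l = i\,\sigma^{\bfs_l}U_l$ together with the product rule and linearity of the trace, and recognize the resulting two terms as the commutator $i\,U_{>l}[\sigma^{\bfs_l},\rho_l^{out}]U_{>l}^\dagger$. The only difference is organizational --- you isolate the derivative onto $\rho_l^{out}$ before substituting, and you spell out the gate-level facts (disjoint tensor factors, a Pauli string commuting with its own exponential) that the paper leaves implicit --- which makes your write-up slightly more self-contained but not a different argument.
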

This is a known result for completeness the proof is provided in Appendix \ref{proof:lem:loss derivative}.
 The expression in the lemma can be implemented using a quantum circuit with measurements at the end. The circuit is shown in Fig.~\ref{fig:loss derivative circuit multi} which is a special example of the generalized Hadamard test \cite{Mitarai2019,Harrow2021}.  In this procedure, given a sample $\ket{\phi}$ and an ancilla qubit $\ket{0}$, we first apply the first $l$ layers of the ansatz, and then apply a special circuit with controlled rotations for taking the derivative of the loss. Then, the rest of the layers of the ansatz are applied and measurement is performed. Given the outcome $(\hat{y},z)$, we compute $g_l :=-2(-1)^z \ell(y, \hat{y})$ as the measured derivative, where $\ell(\cdot, \cdot)$ is the target loss function. We have the following result with the proof given in Appendix \ref{proof:lem:gradient unbiased}.
\begin{lemma}\label{lem:gradient unbiased}
Let $(\hat{y}, z)$ be the output of the circuit in Fig. \ref{fig:loss derivative circuit multi} and  $g_l :=-2(-1)^z \ell(y, \hat{y})$. Then, $\EE[g_l] = \frac{\partial  \Loss( \parameter, \ket{\phi}, y)
}{\partial a_{l}}$.
\end{lemma}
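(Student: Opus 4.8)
The plan is to reduce the claim to Lemma~\ref{lem:loss derivative} by computing the joint output distribution of the circuit in Fig.~\ref{fig:loss derivative circuit multi} and then evaluating $\EE[g_l]$ directly against that distribution. Writing $\rho := \rho_l^{out} = U_{\leq l}\ketbra{\phi}U_{\leq l}^\dagger$ for the state entering the derivative gadget, the only quantity I ultimately need is the signed ancilla marginal $P(\hat y,0)-P(\hat y,1)$, since
\begin{align*}
\EE[g_l] = \sum_{\hat y, z}P(\hat y,z)\,(-2)(-1)^z\ell(y,\hat y) = -2\sum_{\hat y}\ell(y,\hat y)\big(P(\hat y,0)-P(\hat y,1)\big).
\end{align*}
So the whole argument rests on showing that this signed marginal reproduces the commutator trace in \eqref{eq:losss derivative} up to the constant $-\tfrac{i}{2}$.

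First I would track the state through the generalized Hadamard test. Initializing the ancilla in $\ket0$ and Hadamarding it produces the superposition $\tfrac{1}{\sqrt2}(\ket0+\ket1)$; the controlled-$\sigma^{\bfs_l}$ then maps $\tfrac12\sum_{a,b}\ket a\bra b\otimes\rho$ to $\tfrac12\sum_{a,b}\ket a\bra b\otimes K_a\rho K_b$, where $K_0=I$ and $K_1=\sigma^{\bfs_l}$ (using that $\sigma^{\bfs_l}$ is Hermitian, so $K_b^\dagger=K_b$). A phase gate $S^\dagger$ on the ancilla multiplies the $\ket a\bra b$ block by $(-i)^a i^b$, and the remaining layers $U_{>l}$ conjugate the register factor. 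A final Hadamard on the ancilla followed by measurement of $\{M_{\hat y}\}$ on the register and the computational bit $z$ on the ancilla then yields
\begin{align*}
P(\hat y, z) = \frac14\sum_{a,b}(-1)^{(a+b)z}(-i)^a i^b\,\tr{M_{\hat y}U_{>l}K_a\rho K_b U_{>l}^\dagger}.
\end{align*}

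The key step is forming the difference $P(\hat y,0)-P(\hat y,1)$, which inserts the factor $1-(-1)^{a+b}$ and therefore annihilates the diagonal ($a=b$) terms while doubling the cross terms; what survives is exactly the coherence between the two ancilla branches:
\begin{align*}
P(\hat y,0)-P(\hat y,1) &= \frac{i}{2}\tr{M_{\hat y}U_{>l}\rho\,\sigma^{\bfs_l} U_{>l}^\dagger}-\frac{i}{2}\tr{M_{\hat y}U_{>l}\sigma^{\bfs_l}\rho\,U_{>l}^\dagger}\\
&= -\frac{i}{2}\tr{M_{\hat y}U_{>l}[\sigma^{\bfs_l},\rho]U_{>l}^\dagger}.
\end{align*}
Here the phase $(-i)^a i^b$ from $S^\dagger$ is precisely what turns what would otherwise be an anticommutator (real part) into the commutator (imaginary part). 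Substituting into the expression for $\EE[g_l]$ and comparing with \eqref{eq:losss derivative} gives $\EE[g_l]=\sum_{\hat y}i\ell(y,\hat y)\tr{M_{\hat y}U_{>l}[\sigma^{\bfs_l},\rho_l^{out}]U_{>l}^\dagger}=\partial\Loss/\partial a_l$, as claimed.

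I expect the main obstacle to be bookkeeping rather than anything conceptual: one has to pin down the exact gate ordering and the placement of the $S^\dagger$ (or equivalent) phase actually used in Fig.~\ref{fig:loss derivative circuit multi}, and then verify that the interference signs together with the $-2(-1)^z$ normalization in the definition of $g_l$ conspire to cancel the spurious factor of $\tfrac12$ and reproduce the $+i$ prefactor of the lemma. A secondary subtlety worth stating explicitly is that the cross terms $K_0\rho K_1$ and $K_1\rho K_0$ assemble into a genuine commutator only because $\sigma^{\bfs_l}$ is Hermitian; this Hermiticity is what lets the two traces combine with opposite signs into $[\sigma^{\bfs_l},\rho]$.
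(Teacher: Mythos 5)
Your argument is correct and lands exactly on the paper's target identity, but it takes a genuinely different route, because you analyze a different (though equivalent) realization of the derivative gadget. The paper takes Fig.~\ref{fig:loss derivative circuit multi} literally: the two controlled rotations $R_{\bfs_l}(\pm\pi/2)$ conjugated by $X$ gates turn the ancilla into a coin that selects which rotation hits the register, and since the ancilla is measured in the computational basis (the figure has no final Hadamard and no $S^\dagger$), the off-diagonal ancilla blocks vanish under the partial trace. The signal therefore lives in the \emph{diagonal} blocks: $\EE[g_l]$ becomes the difference of the two rotated expectations $\tr{M_{\hat y}\,R_{\bfs_l}(\pm\pi/2)\rho_l^{out}R_{\bfs_l}(\mp\pi/2)}$, and the paper must then invoke the identity $[\sigma^{\bfs_l},A]=i\bigl(e^{-i\frac{\pi}{4}\sigma^{\bfs_l}}Ae^{i\frac{\pi}{4}\sigma^{\bfs_l}}-e^{i\frac{\pi}{4}\sigma^{\bfs_l}}Ae^{-i\frac{\pi}{4}\sigma^{\bfs_l}}\bigr)$ \citep{Mitarai2018} to convert that difference into the commutator of Lemma~\ref{lem:loss derivative}. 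In your version --- Hadamard, controlled-$\sigma^{\bfs_l}$, $S^\dagger$, Hadamard --- the signal lives entirely in the \emph{off-diagonal} (interference) blocks: the signed marginal $P(\hat y,0)-P(\hat y,1)$ annihilates the $a=b$ terms and the commutator appears directly, with no rotation identity needed. Your route is conceptually cleaner (the commutator is manifestly the ancilla coherence); the paper's route has the virtue of tracking the drawn gates one-for-one.

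The step you flagged but did not close --- matching your gate set to the figure --- is a real obligation, and it is precisely where the only fragile part of the lemma, the sign of $g_l$, gets fixed. The bridge is the identity $R_{\bfs_l}(\pm\pi/2)=\tfrac{1}{\sqrt2}\bigl(I\mp i\sigma^{\bfs_l}\bigr)$: expanding the two $X$-conjugated controlled rotations with it reproduces your $H$, controlled-$\sigma^{\bfs_l}$, $S^\dagger$, $H$ sequence up to the relabeling $z\mapsto 1-z$ of the ancilla outcome, and that relabeling flips the sign of $g_l=-2(-1)^z\ell(y,\hat y)$. Reassuringly, your convention coincides with the one the paper's own proof adopts (its state $\tilde\rho_l$ assigns $R_{\bfs_l}(+\pi/2)$ to the $z=0$ branch), so your $\EE[g_l]$ matches the lemma as stated; note, however, that the circuit as drawn assigns $R_{\bfs_l}(+\pi/2)$ to the $z=1$ branch, so the paper is not internally consistent on this sign, and a complete proof should state explicitly which labeling it uses.
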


\begin{figure}
    \centering
    \resizebox{0.6\textwidth}{!}{
\begin{quantikz}[scale=0.3]
 \lstick{$\ket{\phi}$}&\gate{U_{\leq l}} &\gate{R_{\bfs_l}(+\frac{\pi}{2})}&\qw  &\gate{R_{\bfs_l}(-\frac{\pi}{2})} &\qw & \gate{U_{>l}} & \meter{} \arrow[r]&\rstick{$\hat{y}$}\\
 \lstick{$\ket{0}$} &\gate{H} &\ctrl{-1}  &\gate{X}& \ctrl{-1}  & \gate{X}& \qw & \meter{} \arrow[r]&\rstick{z}
\end{quantikz}
}
    \caption{Circuit for measuring the partial derivative with respect to a parameter $a_{\bfs_l}$ appearing at layer $l$. Here $U_{\leq l}$ corresponds to the first $l$ layers of the ansatz, and $U_{>l}$ to the remaining layers. Here, $X$ is the X-gate and $R_{\bfs_l}$ is the controlled rotation around Pauli $\sigma^{\bfs_l}$. 
    }
    \label{fig:loss derivative circuit multi}
\end{figure}
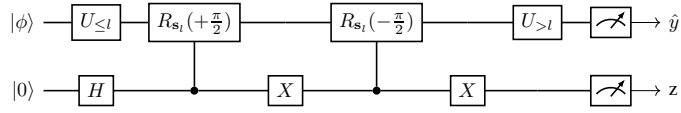

\subsection{Gradient Estimation}\label{subsec:shadow}
\nocite{Heidari2023Quantum1}

Our approach  is inspired by \ac{CST}  \citep{Aaronson2018,Huang2020}, where an approximate description of an unknown
quantum state is obtained by repeated measurements on its exact copies. CST is an approach to  estimate several observables of a quantum state with minimum copy complexity.  Classical shadows are matrix descriptions of the state, and are derived from measurements on randomly rotated bases. The Shadows are used to compute the expectation of any given observable $\CM$ via classical computations. In whis work we utilize CST to efficiently train VQAs. We show that with the shadows one obtains an unbiased estimate of the gradient that leads to convergence to a local minima.    

\noindent\textbf{Classical Shadow Tomography.} We start by describing the procedure for \ac{CST} using Pauli measurements. Let $\rho$ be a $d$-qubit  state (density operator). Then $d$ unitary operators $V_1,\cdots, V_j$ are selected randomly from the  set $\set{ I, H, S^\dagger H}$ with equal probabilities,  where $H$ is the Hadamard gate and $S=\sqrt{Z}$ is the square root of the Pauli-$Z$. The gates are applied to the input state followed by measurement in the computational basis.  The result is a binary string $\bfb^\rho \in \set{0,1}^d$ that is stored classically. Based on the choice of the unitaries and the measurement outcomes, the classical shadow of $\rho$ is a matrix computed as 
\begin{equation}\label{eq:rho hat shadow}
\hat{\rho}:=\bigotimes_{j=1}^d \qty(3 V^\dagger_j \ketbra{b^\rho_j}V_j - I).
\end{equation}

 Surprisingly, even though this  processes operates independently on each qubit, it creates an unbiased shadow even when the state is entangled across the $d$ qubits. In other words, $\EE[\hat{\rho}]=\rho$, where the expectation is taken over all the involved randomness including the selection of $V_j$ and the measurement outcomes $b^\rho_j$. Moreover, since the shadows are classical matrices, no-cloning will not be an issue. Predictions with CST has been studied originally using the median of means estimation \cite{Huang2020}.  In this work, we use the empirical average as the estimator of the gradient components. CST with Pauli measurements is advantageous when the  observables are local. There are several applications of \textit{near-term} \acp{QC} with local observables in many-body quantum physics \cite{Huang2020}, quantum chemistry \cite{Kandala2017}, lattice gauge theory \cite{Kokail2019}, and learning and optimization \cite{Chen2022,Atici2007,Saglam2018}.   In what follows, we describe the estimation process for $k$-local observables. 
 
 An observable $\CM$ on $d$ qubits is called $k$-local if it acts on at most $k <d$ qubits non trivially.  In other words, its operator can be written as $I_{d-k}\tensor M_{k},$ where $M_{k}$ is a non-trivial operator on the space of $k$ qubits, indexed by $q_1, \cdots, q_k$.  Typically, one takes a fixed $k$ much smaller than $d$. 
Given $n$ copies (samples) of a mixed state $\rho$, the objective is to estimate the expectation $\<\CM\>_{\rho}$\footnote{This is equivalent to the case where $n$ quantum samples $\ket{\phi_i}, i\in [n]$ are drawn from a distribution $D$ so that  $\rho=\EE_D[\ketbra{\phi}]$.}.  For this, we apply CST on the $i$th copy of $\rho$ resulting in the bit string $\bfb^\rho_i$ and the index of the chosen unitary matrices $V_{i, 1}, \cdots, V_{i, d}$. Then, we compute a $k$-local shadow using \eqref{eq:rho hat shadow} but on the relevant qubits: 
\begin{align}\label{eq:stochastic QSS}
\hat{\rho}^k_{i} := \bigotimes_{j=1}^k \qty(3 V^\dagger_{i, q_j} \ketbra{b^\rho_{i, q_j}}V_{i, q_j} - I).
\end{align}
 Lastly, we apply the following estimator:
\begin{align}\label{eq:CM hat}
  \hat{S}:=\frac{1}{n}\sum_{i\in [n]}\tr{\CM \hat{\rho}^k_{i}},
\end{align}
This  procedure can be easily extended to estimate multiple observables. In what follows,  we show that this estimator is unbiased with the proof given in Appendix \ref{proof:thm:shadow unbiased}.

\begin{theorem}\label{thm:shadow unbiased}
For the input state $\rho$ and the $k$-local observable $\CM$, the estimation $\hat{S}$ in \eqref{eq:CM hat} is unbiased, that is  $\EE[ \hat{S}]=\<\CM\>_{\rho}$, where the expectation is taken over the sample distribution and the randomness in QSS. Moreover, the square error of the estimation is
$\var(\hat{S}) \leq \frac{3^k}{n} \norm{\CM}^2_\infty.$
\end{theorem}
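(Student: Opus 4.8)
The plan is to establish the two claims—unbiasedness and the variance bound—separately, since the unbiasedness is essentially a direct consequence of the single-copy shadow property, while the variance bound requires the locality structure to produce the $3^k$ factor. First I would prove that the single-copy $k$-local shadow is unbiased, i.e.\ $\EE[\tr{\CM\hat\rho^k_i}]=\<\CM\>_\rho$. The key fact is that the Pauli-based classical shadow channel factorizes across qubits, and on each qubit the map $\ketbra{b_j}\mapsto 3V_j^\dagger\ketbra{b_j}V_j-I$ is, in expectation over the random choice of $V_j\in\{I,H,S^\dagger H\}$ and the measurement outcome, the inverse of the single-qubit depolarizing (measurement) channel. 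Concretely, averaging $3V_j^\dagger\ketbra{b_j}V_j-I$ over the three Clifford rotations and over Born-rule outcomes returns the reduced state on qubit $j$. Because $\CM=I_{d-k}\otimes M_k$ acts nontrivially only on the $k$ qubits $q_1,\dots,q_k$, tracing against $\CM$ only sees the shadow on those $k$ qubits, and the identity factors on the remaining $d-k$ qubits contribute trivially. Taking the tensor product of the per-qubit expectations recovers $\tr{\CM\rho}$, and then averaging over the $n$ i.i.d.\ copies preserves the expectation, giving $\EE[\hat S]=\<\CM\>_\rho$.

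For the variance bound, I would first reduce to a single-copy computation: since $\hat S$ is an empirical average of $n$ i.i.d.\ terms $\tr{\CM\hat\rho^k_i}$, we have $\var(\hat S)=\frac1n\var(\tr{\CM\hat\rho^k_1})\le\frac1n\EE\big[(\tr{\CM\hat\rho^k_1})^2\big]$. The task is then to bound the second moment of a single $k$-local shadow estimate. I would expand $(\tr{\CM\hat\rho^k})^2=\tr{(\CM\otimes\CM)(\hat\rho^k\otimes\hat\rho^k)}$ and use the tensor-product structure so that the expectation factorizes into a product of $k$ single-qubit second-moment operators. The central single-qubit computation is to evaluate $\EE[(3V^\dagger\ketbra{b}V-I)^{\otimes 2}]$ over the random rotation and outcome; this is exactly the object that appears in the standard classical-shadow variance analysis, and it yields the single-qubit ``shadow norm'' factor whose operator-norm contribution is bounded by $3$. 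Multiplying over the $k$ nontrivial qubits produces the $3^k$ factor, and the operator-norm bound $\norm{\CM}_\infty^2$ comes from bounding the trace of $M_k$ against the relevant second-moment operator on the $k$-qubit subspace.

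The main obstacle I anticipate is the single-qubit second-moment calculation and the bookkeeping needed to convert the resulting tensor of operators into the clean bound $3^k\norm{\CM}_\infty^2$. Specifically, one must show that the per-qubit second-moment superoperator $\EE[(3V^\dagger\ketbra{b}V-I)^{\otimes 2}]$, when contracted against $M_k$, contributes a factor that is at most $3$ per qubit in the appropriate norm, rather than a looser constant. This is where the choice of the three Clifford rotations $\{I,H,S^\dagger H\}$—which together form a single-qubit $3$-design up to the relevant moments—matters, and care is needed because the shadow estimator is unbiased but has heavy per-qubit fluctuations (each single-qubit factor has operator norm as large as $2$, so naive bounds would give $4^k$ rather than $3^k$). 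Getting the sharp $3^k$ rather than $4^k$ requires exploiting cancellations in the second moment, and I expect this to be the technically delicate step; the remaining reductions (i.i.d.\ averaging, tensor factorization, and the trivial action on the $d-k$ identity qubits) are routine.
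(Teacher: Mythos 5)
Your unbiasedness argument follows essentially the same route as the paper's own proof: invert the single-qubit measurement channel in expectation (the paper's map $\Gamma_0^{-1}$), use the tensor-product structure of the shadow, observe that the identity factors on the $d-k$ irrelevant qubits drop out when tracing against $\CM$, and average over the $n$ i.i.d.\ copies. One caveat: for entangled $\rho$ the outcomes on different qubits are correlated, so the expectation does \emph{not} factor across qubits at the level of outcome probabilities; the factorization must be carried out at the superoperator level, with the state entering linearly (the paper does this via iterated expectations). This is routine to repair and does not affect the conclusion $\EE[\hat{S}]=\<\CM\>_{\rho}$.

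The genuine gap is in the variance bound, at exactly the step you flagged as delicate and then deferred. The per-qubit second-moment computation you describe is the standard Huang--Kueng--Preskill shadow-norm analysis, and for a general (non-Pauli) $k$-local observable that analysis yields $\EE[(\tr(\CM\hat{\rho}^k_1))^2]\leq 4^k\norm{\CM}^2_\infty$, not $3^k\norm{\CM}^2_\infty$: after conditioning on the basis choices, one must recombine the contributions of all $2^k$ subsets of the $k$ sites (identity versus matched Pauli on each site), and the Cauchy--Schwarz step that controls their interference produces the factor $\sum_{A\subseteq [k]}3^{\abs{A}}=4^k$. The constant $3^k$ is exact when $\CM$ is a single $k$-qubit Pauli string, but your assertion that ``exploiting cancellations'' lowers the per-qubit constant from $4$ to $3$ for \emph{arbitrary} $k$-local $\CM$ is precisely the nontrivial content of the claimed inequality, and no argument is given; as written, your proposal establishes at best the $4^k$ bound.

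For context, the paper's own appendix does not establish the stated bound either. It bounds the single-copy second moment essentially deterministically, pulling out the factor $\prod_{j}W_{q_j}^2=9^k$ (equivalently, the per-qubit trace norm $\norm{3V^\dagger\ketbra{b}V-I}_1=3$) and bounding what remains by $\norm{\CM}^2_\infty$, which yields $\var(\hat{S})\leq \frac{9^k}{n}\norm{\CM}^2_\infty$; the sentence immediately following the theorem in the main text concedes this $9^k$. So your route, completed with the standard averaging argument, would in fact prove a strictly stronger bound ($4^k$) than the paper's proof ($9^k$), yet neither argument reaches the $3^k\norm{\CM}^2_\infty$ claimed in the theorem statement for general $k$-local observables; closing that gap would require a genuinely sharper second-moment analysis, or restricting $\CM$ to Pauli-string observables.
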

The bound on variance of the estimation can be improved in terms of the factor $9^k$ and the infinity norm. 
It is worth noting that the drawback of CST is when the observables are not local leading to a large error term. In that case, one needs to use Clifford gates to reduce the estimation error. However, this process might not be computationally tractable due to exponential dimensions. 

\subsection{Quantum shadow gradient descent (QSGD)}\label{subsec:QSGD}
We use CST to measure the gradient of the loss by applying it to each input sample. With this procedure and Lemma \ref{lem:loss derivative}  we can estimate all the derivatives of the loss. 

At each iteration $t$, given the input sample $(\ket{\phi_t}, y_t)$, we generate the corresponding shadow  $\hat{\rho}_t$ and pass it through the ansatz. Let $\hat{\rho}_l^{out}$ be the output of the ansatz at layer $l$ when the input is the shadow $\hat{\Phi}_t$. Then, we obtain the derivatives of the gradient via the circuit in Fig. \ref{fig:loss derivative circuit multi}. 
Continuing this for all the layers with copies of $\hat{\Phi}_t$, we obtain the shadow gradient for the current sample denoted by $ \widehat{\nabla \Loss} ( \parameter, \hat{\rho}^{out}_t, y)$.  With this procedure, we apply the following update rule: 
\begin{align}\label{eq:QSGD shadow}
\parameter^{(t+1)} = \parameter^{(t)} - \eta_t \widehat{\nabla \Loss}(\parameter,\hat{\rho}^{out}_t, y_t).
\end{align}

From this, we obtain our QSGD process, which is summarized in Alg.~\ref{alg:QSGD}.

\begin{algorithm}[ht]
\caption{QSGD}\label{alg:QSGD}
\begin{algorithmic}[1]
\STATE Initialize $\parameter^{(1)}$ randomly.
\FOR{$t=1$ to $n$}
\STATE Compute $\hat{\rho}_t$ the shadow of $\ket{\phi_t}$ as in \eqref{eq:rho hat shadow}. 
\STATE Estimate the partial derivative for each parameter $a_{\bfs_l}$ with CST and with the input being  $\hat{\rho}_t$. 
\STATE Apply the update rule 
$\parameter^{(t+1)} = \parameter^{(t)} - \eta_t \widehat{\nabla \Loss}.$
\ENDFOR
\STATE \textbf{return} $\overrightarrow{a}^{(n+1)}.$
\end{algorithmic}
\label{ald:osgd}
\end{algorithm}

\subsection{Convergence analysis}\label{subsec:convergence}
Research in the development of convergence guarantees for stochastic gradient descent  with general loss functions is an active area of research even in classical learning and optimization. This endeavor holds significance in establishing a robust theoretical foundation for machine learning algorithms \cite{Bottou2018,Wolf2023}. Convergence guarantees of SGD have been studied under special classes of loss functions under various conditions such as convex, strong-convex, and Polynak-\L ojasiewicz (PL) inequality (for a detailed discussion see \cite{Wolf2023} and the references therein). Similar guarantees have been presented for \acp{VQA} under various constraints \cite{Sweke2020,Harrow2021}.  However, \acp{VQA} frequently feature objective functions that are non-convex and deviate from these conditions. 

In this paper, we present a theoretical analysis for generic non-convex loss, offering insights into the comparison of various methods. We compare the convergence of three approaches based on the number of quantum samples for (i) the parameter shift rule; (ii) the one-shot  RQSGD; and (iii) our proposed QSGD. 
\begin{theorem}\label{thm:convergence}
    Consider a bounded (potentially non-convex) loss function and let ${l}_{\max}:=\sup_{y, \hat{y}}\ell(y, \hat{y})$. If we apply QSGD with fixed step size $\eta = \frac{\epsilon}{C}$, with  $\epsilon\in (0,1)$, to train an ansatz with $p$ number of parameters and with $n$ quantum labeled samples, then there will be $T=n$ iterations and the gradient decays as
    \begin{align*}
    \EE\qty[\frac{1}{T}\sum_{t=1}^T \norm{\nabla\Loss(\parameter^{(t)})}_2^2] \leq \epsilon +C \frac{ l_{\max}^3 }{n \epsilon}\qty(\Loss(\parameter^{(1)})-\Loss(\parameter^{*})),
\end{align*}
where  $C=2 p^{3/2} 3^k$ with $k$ being the locality parameter, and $\parameter^{*}$ is the global best parameter. Moreover, for RQSGD $T=n$ and  $C = p^{5/2}$; whereas for the parameter-shift rule  $T= \frac{n}{mp}$ with $m$ being the number of measurement shots and  $C = 4p^{5/2}$.
\end{theorem}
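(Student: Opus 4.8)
The plan is to run the standard descent-lemma analysis for stochastic gradient descent on a smooth but non-convex objective, supplying the quantum-specific constants from the shadow estimator. Three ingredients are required: (i) a smoothness (Lipschitz-gradient) bound for the VQA loss $\Loss$; (ii) unbiasedness of the one-sample shadow gradient, which is exactly Lemma~\ref{lem:gradient unbiased} combined with Theorem~\ref{thm:shadow unbiased}; and (iii) a second-moment bound on the shadow gradient, furnished by the variance estimate in Theorem~\ref{thm:shadow unbiased}. The quantum content enters only through (ii) and (iii); the rest is classical optimization bookkeeping.

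First I would show that $\Loss(\parameter)$ is $\beta$-smooth with $\beta=O(\sqrt{p}\,l_{\max})$. Because every parameter enters the ansatz through a factor $e^{i a_{\bfs}\sigma^{\bfs}}$ as in \eqref{eq:ansatz U_l}, the first and second partial derivatives of $\Loss$ are expectation values of single and iterated commutators of Pauli strings with the observable---exactly the structure behind Lemma~\ref{lem:loss derivative}---each bounded in operator norm by a constant multiple of $l_{\max}$. Bounding the operator norm of the resulting $p\times p$ Hessian from these entrywise estimates yields the smoothness constant.

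Writing $\hat g^{(t)}:=\widehat{\nabla\Loss}(\parameter,\hat{\rho}^{out}_t,y_t)$ for the step-$t$ shadow gradient, the descent lemma gives $\Loss(\parameter^{(t+1)})\le \Loss(\parameter^{(t)})-\eta\langle\nabla\Loss(\parameter^{(t)}),\hat g^{(t)}\rangle+\tfrac{\beta\eta^2}{2}\norm{\hat g^{(t)}}_2^2$. Taking the conditional expectation and using unbiasedness, $\EE[\hat g^{(t)}\mid\parameter^{(t)}]=\nabla\Loss(\parameter^{(t)})$, turns the cross term into $-\eta\norm{\nabla\Loss(\parameter^{(t)})}_2^2$. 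For the quadratic term I would use $\EE[\norm{\hat g^{(t)}}_2^2]=\norm{\nabla\Loss(\parameter^{(t)})}_2^2+\sum_{j=1}^p\var(\hat g^{(t)}_j)$; since each coordinate is read off a single shadow, Theorem~\ref{thm:shadow unbiased} with one copy bounds each coordinate variance by $3^k\norm{\mathcal{M}_j}_\infty^2=O(3^k l_{\max}^2)$, so $\sum_j\var(\hat g^{(t)}_j)=O(p\,3^k l_{\max}^2)=:\sigma^2$. Correlations among the coordinate estimates are immaterial here, since the vector second moment is the sum of the coordinatewise second moments. Choosing $\eta\le 1/\beta$ so that $1-\beta\eta/2\ge\tfrac12$, rearranging, summing over $t=1,\dots,T$ to telescope the loss differences, and using $\Loss(\parameter^{(T+1)})\ge\Loss(\parameter^{*})$ gives
\[
\frac{1}{T}\sum_{t=1}^T\EE[\norm{\nabla\Loss(\parameter^{(t)})}_2^2]\ \le\ \frac{2\big(\Loss(\parameter^{(1)})-\Loss(\parameter^{*})\big)}{\eta T}+\beta\eta\,\sigma^2.
\]

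With $T=n$, tuning the step size so that the variance term $\beta\eta\,\sigma^2$ equals $\epsilon$ (that is, $\eta\propto\epsilon/(\beta\sigma^2)$) and collecting the dimension-dependent factors of $\beta\sigma^2=O(p^{3/2}3^k l_{\max}^3)$ into $C=2p^{3/2}3^k$, the telescoped bound becomes $\epsilon+O\!\big(C l_{\max}^3(\Loss(\parameter^{(1)})-\Loss(\parameter^{*}))/(n\epsilon)\big)$, which is the stated estimate. The same template covers the other two methods: for RQSGD the single-coordinate update replaces the $3^k$ shadow variance by an extra factor $p$ arising from the uniform coordinate selection (as in randomized coordinate descent), giving $C=p^{5/2}$; for the parameter-shift rule one has $T=n/(mp)$ and an $m$-shot per-coordinate estimator, giving $C=4p^{5/2}$. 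I expect step~(i) to be the main obstacle: pinning down the smoothness constant with the right $\sqrt p$ and $l_{\max}$ dependence for a generic non-convex VQA loss requires bounding the operator norm of the full $p\times p$ Hessian from its Pauli-commutator entries, whereas everything downstream is the standard smooth non-convex SGD argument, with the quantum content entering only through the unbiasedness and variance constants of Theorem~\ref{thm:shadow unbiased}.
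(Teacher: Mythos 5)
Your proposal follows essentially the same skeleton as the paper's proof: the paper does not rederive the descent lemma but instead invokes the generic non-convex SGD bound of Bottou et al.\ (Theorem \ref{thm:non convex SGD} in the appendix), whose hypotheses --- unbiasedness, a second-moment bound $\EE[\norm{g_t}_2^2]\leq \mathfrak{M}+\mathfrak{M}_G\norm{\nabla\Loss(\parameter^{(t)})}_2^2$, and a step-size restriction --- are exactly your items (ii) and (iii), and then sets $\eta=\epsilon/(\mathfrak{L}\mathfrak{M})$ and computes $\mathfrak{M}$ and $T$ per method just as you do. Your telescoping derivation is precisely the proof of that cited black-box result, so this is not a genuinely different route, only an unpacked one.

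Two concrete divergences matter, and both concern the constants rather than the logic. First, your step (i) is a genuine gap, and your own worry about it is justified: the entrywise-Hessian argument you sketch cannot yield $\beta=O(\sqrt{p}\,l_{\max})$, since a $p\times p$ matrix with entries of size $O(l_{\max})$ has operator norm $O(p\,l_{\max})$ in general, which would inflate your final constant to $O(p^{2}3^k)$ rather than the claimed $p^{3/2}3^k$. The paper sidesteps this step entirely: Remark \ref{rem:Lipschitz} cites Schuld et al.\ for $\mathfrak{L}=2\sqrt{p}\,l_{\max}$, which is really the Lipschitz constant of the loss itself (each partial derivative is bounded by $2l_{\max}$, hence $\norm{\nabla\Loss}_2\leq 2\sqrt{p}\,l_{\max}$), and then plugs it into Bottou's theorem where a Lipschitz constant of the \emph{gradient} is required; so the $\sqrt{p}$ is imported by citation and conflation, not derived, in the paper as well. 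Second, the variance factor: you use the $3^k$ bound as stated in Theorem \ref{thm:shadow unbiased}, which reproduces the theorem's $C=2p^{3/2}3^k$; the paper's appendix, however, carries $\EE[\norm{g_t}_2^2]\leq 4p\,9^k l^2_{\max}$ --- consistent with what the proof of Theorem \ref{thm:shadow unbiased} actually establishes (a $9^k/n$ variance, the $3^k$ in its statement being a claimed improvement) --- and concludes an optimality gap of order $p^{3/2}9^k/n$, matching the discussion after Theorem \ref{thm:convergence} but not the theorem statement itself. So your constant is only as solid as the unproved $3^k$ variance bound; apart from these two constants, your argument is sound and coincides with the paper's.
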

Details of the convergence analysis and the proof of the theorem is given in Appendix \ref{sec:proof conv analysis}. We also  present the convergence guarantees to global minimum for strongly convex loss, see Appendix \ref{app:strongly convex} for the details.  


This result implies that RQSGD and the parameter shift rule have a similar convergence rate that is $\epsilon+O(\frac{p^{5/2}}{n \epsilon})$; whereas QSGD has a lower gap of $\epsilon+O(\frac{p^{3/2}9^k}{n\epsilon})$, at least when the ansatz's locality is $k =O(\log p)$. 
Our numerical results in the next section verify this finding.

 \section{Numerical Experiments}\label{sec:numerical}
\begin{figure}
    \centering
    \includegraphics[scale=0.4]{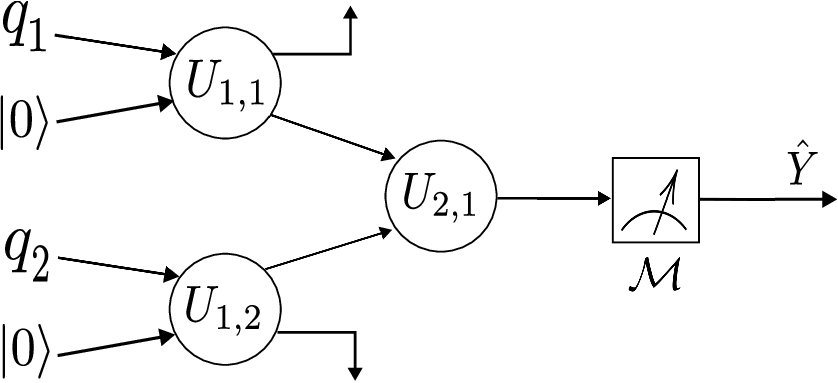}
    \caption{The QNN setup for Exp. 1 and 2.}\label{fig:QNN setup}
\end{figure}
We present experimental results in support of our theoretical claims of accuracy and performance. In each of the comparisons, we use three gradient-based methods: (i) Parameter Shift Rule; (ii) RQSGD; and (iii) the proposed QSGD.  Each experiment consists of an ansatz with a quantum dataset for the task of classification and training. The setup of each experiment is described below.

  \begin{figure*}[ht!]
  \captionsetup[subfigure]{justification=centering}
  \centering
  \begin{subfigure}{0.32\textwidth}
        \includegraphics[width=\textwidth]{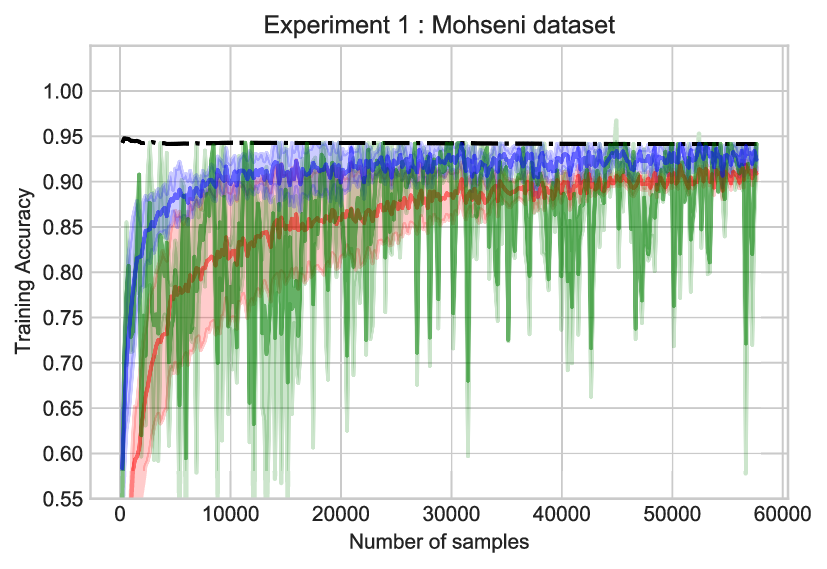}
        
\end{subfigure}
\begin{subfigure}{0.32\textwidth}
        \includegraphics[width=\textwidth]{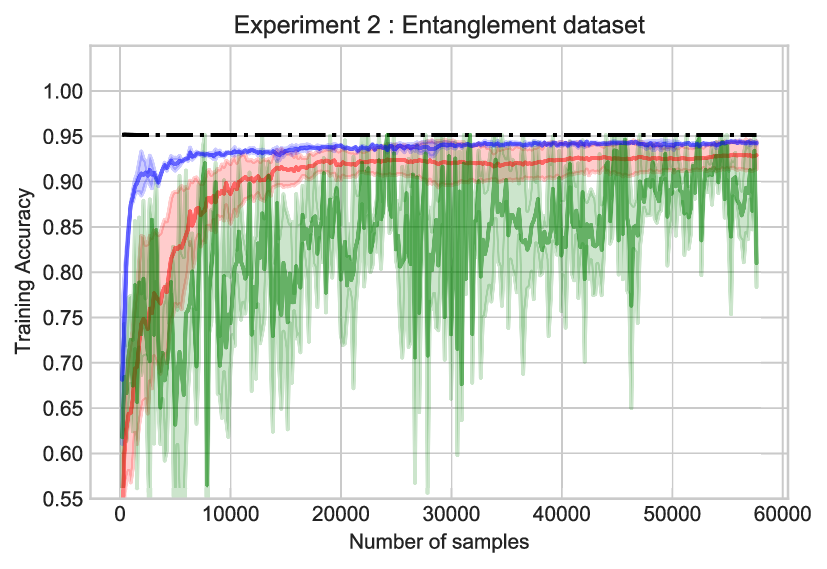}
\end{subfigure}
\begin{subfigure}{0.32\textwidth}
        \includegraphics[width=\textwidth]{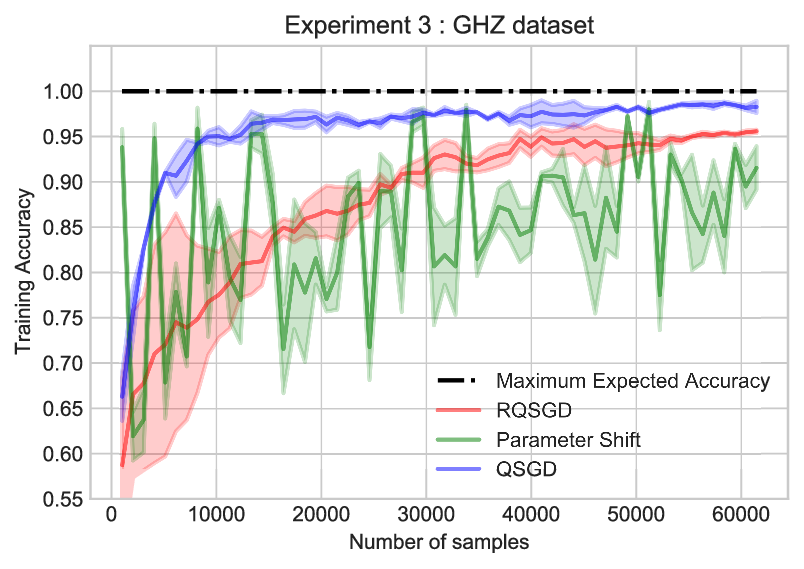}
\end{subfigure}

 \caption{Comparing the training accuracy in three experiments based on four methods: QSGD (this work), RQSGD, and the parameter shift rule gradient computation. The vertical axis is the training accuracy and the horizontal axis is the training samples. The cyan dashed line is the theoretical upper limit on the accuracy. All results were obtained using an optimized learning rate for each method by adopting a grid search technique in different (three) experimental settings.}
    \label{fig:comp}
\end{figure*}

\noindent\textbf{Experiment 1.}
The dataset is synthetically created from recent results in quantum photonics \cite{Mohseni2004,Patterson2021,Li_Song_Wang_2021}. Here, each quantum state is  a superposition of two qubits, with a label $y\in \set{-1, 1}$. Particularly,  each labeled sample is of one of the three forms: $(\ket{\phi_u}, -1), (\ket{\phi_{+v}}, 1), (\ket{\phi_{-v}}, 1)$ with equal probability, where 
\begin{align*}
\ket{\phi_u} := \sqrt{1-u^2}\ket{00}+u\ket{10},\qquad  
 \ket{\phi_{\pm v}} := \pm \sqrt{1-v^2}\ket{01}+v\ket{10},
\end{align*}
where each time $(u, v)$ is selected randomly and uniformly from $[0,1]^2$.

\noindent\textbf{Experiment 2.}
This dataset is procedurally generated to classify between \textit{separable} and \textit{maximally entangled} states. To do this, we randomly generate several $2$-qubit density operators of the two different types. Separable states are of the form $\rho_A\tensor \rho_B$, where $\rho_A$ and $\rho_B$ each are single-qubit density operators ($2\times 2$ matrices) generated randomly based on a Haar measure using \textit{RandomDensityMatrix} in \citep{qetlab}. Each maximally entangled state is of the form $$\ket{\psi}:=\frac{1}{4} (\ket{00}+\ket{01}+\ket{10}+\ket{11}).$$ For the labeling, we assign $y=-1$ to separable states and $y=1$ to maximally entangled states. 

\noindent\textbf{Experiment 3.}
 For testing the performance of our approach in higher dimension ($\dim = 2^4$),  we classify multi-qubits Greenberger-Horne-Zeilinger (GHZ) states (with label $y=1$) from the separable states (with label $y=-1$). The GHZ state for multi-qubits is a generalization of the three-qubit GHZ state. It is a type of entangled quantum state that involves multiple qubits and exhibits non-local correlations. We consider $4$-qubit GHZ state given by
$ \ket{GHZ} = \frac{\ket{0000} +\ket{1111}}{\sqrt{2}}.$
The separable states are generated in the same manner as Experiment 2. 
\begin{table*}[h!]
\centering
\begin{tabular}{l|ccccc}
\toprule
Dataset & QSGD  & RQSGD &  Parameter Shift Rule &Up. bound    \\
\midrule
Exp. 1 
                & 92.47\%     & 91.83 \%     & 86.88\% & 93.81 \%  \\ \hline
Exp. 2 
                & 94.18 \%     & 92.11 \%    & 89.88 \%  & 95.16\%\\ \hline
Exp. 3 
               & 98.29 \%       & 97.59 \%     &83.12 \%  & 99.78 \%  \\
\bottomrule
\end{tabular}
\caption{ Validation accuracy of the trained ansatz for all three experiments.  
}
\label{tab:validation}
\vspace{-10pt}
\end{table*}

\noindent\textbf{Ans\"atze:}
For Experiments 1 and 2, we use a QNN as the ansatz (see Fig. \ref{fig:QNN setup}). This QNN is composed of three perceptrons, 
where $U_{11}, U_{12}, U_{21}$ are parameterized unitary operators of the form 
$\prod_\bfs \exp\{i b_{\bfs} \qps\},$
and  $\bfs\in \set{0,1,2,3}^2$. Hence, there are $p=48$ parameters to be trained. 
The inputs are $2$-qubit states with two ancilla qubits $\ket{00}$.  The measurement performed on the readout qubits is the POVM $\mathcal{M}:= \{ \Pi_{-1}, \Pi_1\}$ with outcomes in $\{-1, 1\}$ and operators
$$\Pi_{-1}= I_2-\Pi_1 =\ketbra{00} + \ketbra{11},$$
where $I_2$ is the $2$-qubit identity.

 For the GHZ dataset, the ansatz is a unitary operator parameterized  as
 $\prod_\bfs \exp\{i b_{\bfs} \qps\}$ as before.
Hence, the unitary includes $p=4^4=256$ parameters for training.  
The output state is measured as POVM $\mathcal{M}':= \{ \Pi'_{-1}, \Pi'_1\}$,  with outcomes in $\{-1,1\}$ and operators defined as:
\begin{align*}
\Pi'_{-1}= I_4-\Pi'_1 = \ketbra{0000} + \ketbra{1111},
\end{align*}
where $I_4$ is the $4$-qubit identity.

\noindent\textbf{Loss.} We use the 0-1 loss functions defined as $l(y, \hat{y})=1$ when $y\neq \hat{y}$ and zero otherwise.  We use the prediction accuracy which is $1-l(y, \hat{y}).$

\subsection{Experimental Results}
We train the ansatz using three different methods: (i) our proposed QSGD, (ii) RQSGD, and (iii) Parameter Shift Rule. We compare these methods with a theoretical upper limit (see Lemma \ref{lem:optimal loss} in Appendix \ref{app:optimal loss}). 

Fig.~\ref{fig:comp} presents the training accuracy for the endorsed ansatz configuration, as a function of $n$, the number of samples across the three datasets.  
The figure supports the following claims: the convergence of QSGD is significantly faster than RQSGD and parameter shift rule, and is close to the theoretical upper limit. As expected, the gap is higher for the third experiment as the number of parameters is larger.  This supports our claims about the optimality gap. 

We also generated a validation set of 2000 samples to test and compare the training procedures' accuracy. These results are summarized in Table \ref{tab:validation}.  
 From this table, we note that the accuracy of QSGD is significantly better than the other methods and is close to the theoretical upper limit as well. 
 All experiments are simulated on a classical computer. The implementation details are provided in Appendix \ref{app:imp details} and source code is given in the Supplementary material.

\section{Related results}\label{sec:related}
There has been significant recent interest in the development of VQAs and QNNs -- we highlight closely related results here.
VQAs, in general, have been studied for a variety of problems involving quantum or classical data \cite{cerezo2020variational,guerreschi2017practical}. Learning from quantum data has been studied extensively in recent literature in the context of diverse applications, including phase-of-matter detection \cite{Carrasquilla2017,Broecker2017}, ground-state search \cite{Carleo2017,Broughton2020,Biamonte2017}, entanglement detection \cite{Ma2018,massoli2021leap,Lu2018,Hiesmayr2021,Chen2021a,Deng2017}, and related problems in theoretical chemistry \cite{Kassal2011,McArdle2020,Hempel2018,Cao2019,Bauer2020}. VQAs are also applicable to classical data by mapping them to input quantum states~\citep{Giovannetti2008,Park2019,Rebentrost2014,Lloyd2013,Lloyd2014}.
 \acp{QNN} have received significant research attention since the early work \cite{Toth_1996,lewenstein1994quantum} to more recent developments over the past two  decades~\cite{Schuld_2014,Mitarai2018,Farhi2018,Torrontegui2018,Schuld_2020,Beer2020}.   Massoli et al. provide an excellent survey of these efforts~\citep{massoli2021leap}. VQAs have also been applied to the specific context of \acp{QNN}  ~\cite{Mitarai2018,Farhi2018,Schuld_2020,Beer2020,HeidariAAAI2022,cerezo2020variational,guerreschi2017practical}. 

\section*{Conclusion}
We proposed a novel quantum shadow sampling technique to rapidly estimate quantum states while being consistent with the no-cloning postulate. Our one-shot method takes quantum states as input and produces several shadows for each state. We use the shadows to approximate the gradient of the loss with respect to the ansatz parameters. Based on our shadowing procedure, we proposed  a novel update rule called quantum shadow gradient descent. 
We proved that our method has excellent convergence guarantees. 
In summary, we presented a novel gradient-based update that is one-shot, has a fast convergence rate and is applicable to a large class of unitaries. We supported all our theoretical claims with experimental verification.


\bibliographystyle{alpha}
\bibliography{main}


\appendix
\onecolumn

\section{Proof of Lemma \ref{lem:loss derivative}}\label{proof:lem:loss derivative}
\begin{customlemma}{\ref{lem:loss derivative}}
Let ${\rho}_l^{out} =U_{\leq l}\rho U_{\leq l}^\dagger$ denote the density operator of the output state at layer $l$ when the input is $\rho$. Then, the derivative of the loss is given by: 
\begin{align*}
\frac{\partial  \Loss( a_{\bfs}, \ket{\phi}, y)}{\partial a_{\bfs_l}} = \sum_{\hat{y}} i \ell(y,\hat{y})  \tr{M_{\hat{y}} U_{>l} \big[ \sigma^{\bfs_l}, {\rho}_l^{out}\big]U^\dagger_{>l}},
\end{align*}
where $[\cdot, \cdot]$ is the commutator operation and $y$ is the label of $\rho$.
\end{customlemma}

\begin{proof}
The linearity of the derivatives yields the following:
 \begin{align*}
     &\frac{\partial  \Loss( \overrightarrow{a}, \ket{\phi_t}, y_t)}{\partial a_{\bfs_l}} =\sum_{\hat{y}} \ell(y_t,\hat{y}) \tr\Big\{ M_{\hat{y}}\frac{\partial }{\partial a_{\bfs_l}} \Big(U(\overrightarrow{a})\ketbra{\phi_t} U^\dagger(\overrightarrow{a})\Big)\Big\}.
 \end{align*}
Note that $U(\parameter) = U_{>l}U_{\leq l}$. Hence, the derivative only acts on the first $l$ layers. For that, $\frac{\partial }{\partial a_{\bfs_l}} U = iU_{>l}\sigma^{\bfs_l} U_{\leq l}$. Hence, the derivative inside the trace is given by:
\begin{align*}
&i \Big( U_{>l} \sigma^{\bfs_l} U_{\leq l} \ketbra{\phi_t} U^\dagger - U\ketbra{\phi_t} U_{\leq l}^\dagger \sigma^{\bfs_l} U_{>l}^\dagger\Big).
\end{align*}
Denote ${\rho}_l^{out} =U_{\leq l}\rho U_{\leq l}^\dagger$ as the output state. Then, the above term can be written as the commutator $U_{>l} \big[ \sigma^{\bfs_l}, {\rho}_l^{out}\big]U_{>l}^\dagger$. With this, the expression in the lemma is obtained. 
\end{proof}

\section{Proof of Lemma \ref{lem:gradient unbiased}}\label{proof:lem:gradient unbiased}
\begin{customlemma}{\ref{lem:gradient unbiased}}
Let $(\hat{y}, z)$ be the output of the circuit in Fig. \ref{fig:loss derivative circuit multi} and  $g_l :=-2(-1)^z \ell(y, \hat{y})$. Then, $\EE[g_l] = \frac{\partial  \Loss( \parameter, \ket{\phi}, y).
}{\partial a_{\bfs_l}}$
\end{customlemma}
\begin{proof}
It is not difficult to see that the circuit in Figure \ref{fig:loss derivative circuit multi}  outputs:  
\begin{align*}
  \tilde{\rho}_l:=\frac{1}{2}\Big( & R_{\bfs_l}\qty(\pi/2)  \rho^{out}_l  R_{\bfs_l}\qty(-\pi/2) \tensor \ketbra{0} + R_{\bfs_l}\qty(\pi/2)  \rho^{out}_l  R_{\bfs_l}\qty(\pi/2) \tensor \outerproduct{0}{1}\\
 & + R_{\bfs_l}\qty(-\pi/2)  \rho^{out}_l  R_{\bfs_l}\qty(-\pi/2) \tensor \outerproduct{1}{0}+ R_{\bfs_l}\qty(-\pi/2)  \rho^{out}_l  R_{\bfs_l}\qty(\pi/2) \tensor \outerproduct{1}{1}\Big).
\end{align*}
After the measurements with $(\hat{y},z)$ as the outcome, we output $g_{\bfs_l} =-2 (-1)^z \ell(y, \hat{y})$. Taking the expectation over $(Z, \hat{Y})$ gives
\begin{align*}
    \EE[g_{\bfs_l}] &=-2 \sum_{\hat{y}, z} (-1)^z \ell(y_t,\hat{y}) \tr{M_{\hat{y}}\tensor\ketbra{z} \tilde{\rho}_l }\\
    & =-2 \sum_{\hat{y}, z} (-1)^z \ell(y_t,\hat{y}) \tr{M_{\hat{y}} \tr_{Z}\{ \ketbra{z} \tilde{\rho}_l} \},
   \end{align*}
where $\tr_{Z}$ is the partial trace over the last qubit of $\tilde{\rho}_l$. This term equals to 
\begin{align*}
\tr_{Z}\{ \ketbra{0} \tilde{\rho}_l\} & = \frac{1}{2}\qty(R_{\bfs_l}\qty(+\pi/2)  \rho^{out}_l  R_{\bfs_l}\qty(-\pi/2))\\
\tr_{Z}\{ \ketbra{1} \tilde{\rho}_l\} & = \frac{1}{2}\qty( R_{\bfs_l}\qty(-\pi/2)  \rho^{out}_l  R_{\bfs_l}\qty(+\pi/2))
\end{align*}
Therefore, we have that
\begin{align}\label{eq:EE g_sl}
 \EE[g_{\bfs_l}] & = -\sum_{\hat{y}} \ell(y_t,\hat{y}) \tr\Big\{ M_{\hat{y}} \Big( R_{\bfs_l}\qty(+\pi/2)  \rho^{out}_l  R_{\bfs_l}\qty(-\pi/2) -  R_{\bfs_l}\qty(-\pi/2)  \rho^{out}_l  R_{\bfs_l}\qty(+\pi/2)\Big)\Big\}
\end{align}
Next, we show that this is equal to the partial derivative of $\Loss$. From Lemma \ref{lem:loss derivative}, the derivative equals to the following
\begin{align*}
\frac{\partial  \Loss( \parameter, \ket{\phi}, y)
}{\partial a_{\bfs_l}} & = \sum_{\hat{y}} i \ell(y_t,\hat{y})  \tr{M_{\hat{y}} U_{>l} \big[ \sigma^{\bfs_l}, {\rho}_l^{out}\big]U^\dagger_{>l}}.
\end{align*}
It is known that (see \citep {Mitarai2018}) the commutator of a Pauli string $\qps$ with any operator $A$ equals to : 
\begin{align*}
\big[ \qps, A\big] &= i \Big( e^{-i\frac{\pi}{4}\qps}Ae^{i\frac{\pi}{4}\qps} - e^{i\frac{\pi}{4}\qps}Ae^{-i\frac{\pi}{4}\qps} \Big).
\end{align*}
With this relation, the derivative of the loss equals:
\begin{align*}
\frac{\partial  \Loss( \parameter, \ket{\phi}, y)
}{\partial a_{\bfs_l}} = -\sum_{\hat{y}}  \ell(y_t,\hat{y})  &\tr\Big\{ M_{\hat{y}} \Big( e^{-i\frac{\pi}{4}\sigma^{\bfs_l}}\rho_l^{out}e^{i\frac{\pi}{4}\qps} - e^{i\frac{\pi}{4}\qps}\rho_t^{out}e^{-i\frac{\pi}{4}\qps} \Big)\Big\}.
\end{align*}
This expression is equal to the right hand side of \eqref{eq:EE g_sl}. Hence the proof is completed. 
\end{proof}

\section{Proof of Theorem \ref{thm:shadow unbiased}}\label{proof:thm:shadow unbiased}
\begin{customthm}{\ref{thm:shadow unbiased}}
For the input state $\Phi$ and the $k$-local observable $\CM$, the estimation $\hat{S}$ in \eqref{eq:CM hat} is unbiased, that is  $\EE[ \hat{S}]=\<\CM\>_{\Phi}$, where the expectation is taken over the samples distribution and the randomness in QSS. Moreover the square error of the estimation is
\begin{align*}
\var(\hat{S}) \leq \frac{3^k}{n} \norm{\CM}^2_\infty.
\end{align*}
\end{customthm}
\begin{proof}
For the first statement of the theorem, we start with taking the expectation: 
\begin{align*}
 \EE[{\hat{S}_\ell}]&= \frac{1}{nm}\sum_{i\in [n]}\sum_{r\in [m]}  \EE\Big[X_{ i,r} \prod_{j=1}^k W_{i,q_j}\Big]\\
 &=  \frac{1}{nm}\sum_{i\in [n]}\sum_{r\in [m]}  \EE\Big[ \tr{\CM \hat{\Phi}_i} \prod_{j=1}^k W_{i,q_j}\Big]\\
&=  \frac{1}{n}\sum_{i\in [n]}   \tr{\CM \EE\Big[\prod_{j=1}^k W_{i,q_j}\hat{\Phi}_i \Big]}
\end{align*}
where we used the linearity of the trace and expectation. Let $\mathcal{Q}=\{q_1, \cdots, q_k\}$, be the set of qubits that the observable $\CM$ acts on. From the definition of $\hat{\Phi}_i$ in \eqref{eq:stochastic QSS} we have that 
\begin{align}\label{eq:proof tensor state}
\prod_{j=1}^k W_{i,q_j}\hat{\Phi}_i = \bigotimes_{j\in  \mathcal{Q}} \big( W_{i,j} \ketbra{\omega_{j}^\Phi}\big) \bigotimes_{j'\notin \mathcal{Q}} \ketbra{\omega_{j'}^\Phi}
\end{align}

Let $\Gamma_0: \mathcal{B}[\mathcal{H}_2] \rightarrow \mathcal{B}[\mathcal{H}_2]$ be a linear mapping on $\mathcal{B}[\mathcal{H}_2]$, the space of all operators on the two dimensional Hilbert space $\mathcal{H}_2$. For any operator $B\in \mathcal{B}[\mathcal{H}_2]$, define: 
\begin{align*}
 \Gamma_0[B]:=\hspace{-10pt}\sum_{V\in \set{ I, H, S^\dagger H}}\sum_{b\in \set{0,1}} \frac{1}{3}\matrixelement{b}{V^{\dagger} B V}{b}~ V\ketbra{b}V^\dagger.
 \end{align*} 
 The inverse of this mapping has been used in single qubit Shadow tomography \cite{Huang2020}.  By direct calculation, it is not difficult to show that for a generic state $\ket{\bfa}=a_0 \ket{0}+a_1\ket{1}$, we have that: 
\begin{align*}
\Gamma_0^{-1}[\ketbra{\bfa}] = \begin{bmatrix}
2|a_0|^2-|a_1|^2 & 3a_0a_1^*\\
3a_0^*a_1 & 2|a_1|^2-|a_0|^2
\end{bmatrix}.
\end{align*}

Note that each $\ket{\omega^\Psi_j}$ in \eqref{eq:stochastic QSS} belongs to the set $\set{\ket{0}, \ket{1}, \ket{+}, \ket{-}, \ket{+i}, \ket{-i}}$. For any $c\in \set{0,1,+,-,+i, -i}$, let $\bar{c}$ denote its basis complement. That is, if $c=0$ then $\bar{c}=1$; if  $c=+$, then $\bar{c}=-$, and if $c=+i$, then $\bar{c}=-i$.   Hence, by direct calculation, we can show that applying $\Gamma_0^{-1}$ on $\ket{\omega^\Phi_j}$ gives:
\begin{align*}
&\Gamma_0^{-1}[\ketbra{\omega^\Phi_j}]=2\ketbra{\omega^\Phi_j}-\ketbra{\bar{\omega}^\Phi_j}.
\end{align*}
Note that by normalizing the above expression, we obtain the mixed state: $\frac{2}{3}\ketbra{\omega^\Phi_j}-\frac{1}{3}\ketbra{\bar{\omega}^\Phi_j}$. Now fix a choice of $V_j$ and $b_j^{\Phi}$ in the QSS process. Let $Q[b_j^{\Phi}] = (c_j, w_j)$ be the output of the channel $Q$.  Taking the expectation of $W_j\ketbra{\omega^\Phi_j}$ over the randomness in the channel $Q$ gives 
\begin{align*}
\EE\Big[W_j\ketbra{\omega^\Phi_j} \Big| V_j, b_j^{\Phi}\Big] &= \EE\Big[W_jV_j \ketbra{Q[b_j^{\Phi}]}V_j^\dagger  \Big| V_j, b_j^{\Phi}\Big]\\
&=\frac{2}{3} \times 3\times V_j \ketbra{b_j^{\Phi}}V_j^\dagger - \frac{1}{3} \times 3\times  V_j \ketbra{\bar{b}_j^{\Phi}}V_j^\dagger\\
&  = 2\ketbra{\omega^\Phi_j}-\ketbra{\bar{\omega}^\Phi_j}\\
&=\Gamma_0^{-1}[\ketbra{\omega^\Phi_j}].
\end{align*}

Note that $\hat{S}_\ell$ only depends on the terms in $\mathcal{Q}$ as  $\CM_\ell$ only acts on $\mathcal{Q}$. Hence, without changing $\hat{S}_\ell$, we can replace the second tensor states in \eqref{eq:proof tensor state} with the following states 
 \begin{align*}
  \bigotimes_{j'\notin \mathcal{Q}} \Gamma_0^{-1}[\ketbra{\omega^\Phi_{j'}}].
  \end{align*} 
This leads to the following expression for $\hat{S}_\ell$:
\begin{align*}
\EE[{\hat{S}_\ell}] = \frac{1}{n}\sum_{i=1}^n   \tr{\CM_\ell \EE\Big[\bigotimes_{j\in  \mathcal{Q}} \big( W_{i,j} \ketbra{\omega_{j}^\Phi}\big) \bigotimes_{j'\notin \mathcal{Q}} \Gamma_0^{-1}[\ketbra{\omega^\Phi_{j'}}] \Big]}
\end{align*}
Next, with the law of iterative expectations, we first calculate the expectation over the randomness of the channel $Q$.  Consequently, as the channel $Q$ is applied independently to each $j$, then
\begin{align*}
\EE[\hat{S}] &= \frac{1}{n}\sum_{i=1}^n   \tr{\CM \EE\Big[\bigotimes_{j\in  \mathcal{Q}} \Gamma_0^{-1}[\ketbra{\omega^\Phi_j}] \bigotimes_{j'\notin \mathcal{Q}} \Gamma_0^{-1}[\ketbra{\omega^\Phi_{j'}}] \Big]} = \frac{1}{n}\sum_{i=1}^n   \tr{\CM \EE\Big[\bigotimes_{j=1}^d \Gamma_0^{-1}[\ketbra{\omega^\Phi_j}]\Big]}\\
&=\frac{1}{n}\sum_{i=1}^n   \tr{\CM \EE\big[\hat{\Phi}_i\big]}\\
&= \frac{1}{n}\sum_{i=1}^n   \tr{\CM \Phi } = \<\CM\>_{\Phi},
\end{align*}
where the remaining expectation is over the randomness of $V_j$ and $b_j$ and  we used  \cite[Lemma 6]{Heidari2023Quantum1} and the definition of $\hat{\Phi}_i$. 

The second statement of the theorem concerns with the variance of this estimation. For that, we have that
\begin{align*}
 \EE[(\hat{S})^2] &= \frac{1}{n^2} \sum_{i,i'\in [n]} \EE\big[{X}_i{X}_{i'}\prod_{j=1}^k W_{i,q_j} W_{i',q_j}\big]\\
 &=   \frac{1}{n}\EE\big[{X}^2_{1} \prod_{j=1}^k W^2_{i,q_j}\big] + \frac{n-1}{n}(\<\CM\>_{\Phi})^2\\
&=\frac{9^k}{n}\EE\big[{X}^2_{1}\big] + \frac{n-1}{n} (\<\CM\>_{\Phi})^2
\end{align*}
With the law of total expectation, the first term is written as 
\begin{align*}
\frac{9^k}{n}\EE_{\hat{\Phi}_1}\big[\tr{\CM^2   \hat{\Phi}_1}\big] = \frac{9^k}{n}\tr{\CM_\ell^2  \EE[ \hat{\Phi}_1]} \leq \frac{9^k}{n} \norm{\CM}^2_\infty.
\end{align*}
Therefore, 
\begin{align*}
\var(\hat{S}) &= \EE[(\hat{S})^2] -  \<\CM\>_{\Phi}^2 \leq \frac{9^k}{n}\norm{\CM}_\infty^2  - \frac{1}{n}\<\CM\>_{\Phi}^2\\
&\leq  \frac{9^k}{n}\norm{\CM}_\infty^2.
\end{align*}
\end{proof}

\section{Proof of Convergence Analysis}\label{sec:proof conv analysis}
In our analysis, we consider a generic SGD update rule described by: 
\[\parameter^{(t+1)} = \parameter^{(t)} - \eta_t g_t,\]
where $g_t$ is an unbiased estimate of the gradient $\nabla\Loss(\parameter^{(t)}).$ Our derivation depends on the  Lipschitz constant of the loss function, as calculated in \cite[Theorem 2]{Schuld2018}. 

\begin{remark}\label{rem:Lipschitz}
The loss function $\Loss(\parameter)$ is Lipschitz continuous with the constant $\mathfrak{L}=2 \sqrt{p} {l}_{\max}$, where $p$ is the number of ansatz parameters and ${l}_{\max}:=\sup_{y, \hat{y}}\ell(y, \hat{y})$.
\end{remark}
 Owing to the fact that the loss value and the ansatz parameters are classical we can theoretical guarantees on the convergence to local minima.   We present with the following result using which we define our optimality gap.



\begin{theorem}[\cite{Bottou2018}]\label{thm:non convex SGD}
Consider an SGD method with the sub-gradient $g_t$ and fixed step size $\eta$, such that (i) $g_t$ is unbiased,  (ii) $\EE[\norm{g_t}_2^2]\leq \mathfrak{M} + \mathfrak{M}_G \norm{\nabla\Loss(\parameter^{(t)})}_2^2$ for some constants $\mathfrak{M}, \mathfrak{M}_G$, and (iii) $0< \eta \leq \frac{1}{\mathfrak{L} \mathfrak{M}_G}$. Then, 
\begin{align*}
    \EE\qty[\frac{1}{T}\sum_{t=1}^T \norm{\nabla\Loss(\parameter^{(t)})}_2^2] \leq \eta\mathfrak{L}\mathfrak{M} + \frac{2 \qty(\Loss(\parameter^{(1)})-\Loss(\parameter^{*}))}{T\eta}.
\end{align*}
\end{theorem}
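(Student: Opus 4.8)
The plan is to prove this via the standard \emph{descent lemma plus telescoping} argument for smooth nonconvex SGD, which is precisely the template of \cite{Bottou2018}. The essential structural input is that $\Loss$ is $\mathfrak{L}$-smooth, i.e. has an $\mathfrak{L}$-Lipschitz gradient (the constant supplied by Remark~\ref{rem:Lipschitz}), so that the quadratic upper bound
\[
\Loss(\bfy)\le \Loss(\bfx)+\<\nabla\Loss(\bfx),\,\bfy-\bfx\>+\tfrac{\mathfrak{L}}{2}\norm{\bfy-\bfx}_2^2
\]
holds for all $\bfx,\bfy\in\RR^p$. First I would specialize this to the update $\parameter^{(t+1)}=\parameter^{(t)}-\eta g_t$, which gives
\[
\Loss(\parameter^{(t+1)})\le \Loss(\parameter^{(t)})-\eta\<\nabla\Loss(\parameter^{(t)}),\,g_t\>+\tfrac{\mathfrak{L}\eta^2}{2}\norm{g_t}_2^2 .
\]

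Next I would take the expectation conditioned on the history through iteration $t$. By unbiasedness (assumption (i)), $\EE[g_t\mid\parameter^{(t)}]=\nabla\Loss(\parameter^{(t)})$, so the cross term collapses to $-\eta\norm{\nabla\Loss(\parameter^{(t)})}_2^2$; by the second-moment bound (ii) the curvature term is at most $\tfrac{\mathfrak{L}\eta^2}{2}\bigl(\mathfrak{M}+\mathfrak{M}_G\norm{\nabla\Loss(\parameter^{(t)})}_2^2\bigr)$. Collecting the two gradient-norm contributions leaves a coefficient $-\eta\bigl(1-\tfrac{\mathfrak{L}\eta\mathfrak{M}_G}{2}\bigr)$ on $\norm{\nabla\Loss(\parameter^{(t)})}_2^2$. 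The key algebraic step is then to invoke the step-size condition (iii): since $\eta\le\tfrac{1}{\mathfrak{L}\mathfrak{M}_G}$ we have $\tfrac{\mathfrak{L}\eta\mathfrak{M}_G}{2}\le\tfrac12$, so that coefficient is bounded above by $-\tfrac{\eta}{2}$. After an outer expectation this yields the one-step recursion
\[
\EE[\Loss(\parameter^{(t+1)})]\le \EE[\Loss(\parameter^{(t)})]-\tfrac{\eta}{2}\,\EE\bigl[\norm{\nabla\Loss(\parameter^{(t)})}_2^2\bigr]+\tfrac{\mathfrak{L}\eta^2\mathfrak{M}}{2}.
\]

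Finally I would rearrange and sum over $t=1,\dots,T$. The loss differences telescope to $\Loss(\parameter^{(1)})-\EE[\Loss(\parameter^{(T+1)})]$, which I bound by $\Loss(\parameter^{(1)})-\Loss(\parameter^{*})$ using that $\parameter^{*}$ is the global minimizer (so $\EE[\Loss(\parameter^{(T+1)})]\ge\Loss(\parameter^{*})$). Dividing through by $T\eta/2$ produces exactly the claimed bound $\eta\mathfrak{L}\mathfrak{M}+\tfrac{2(\Loss(\parameter^{(1)})-\Loss(\parameter^{*}))}{T\eta}$.

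The argument is essentially mechanical, so there is no deep obstacle; the one place demanding care is the expectation bookkeeping. Because $g_t$ depends on all randomness up to iteration $t$, I would use the tower property, conditioning on the $\sigma$-algebra generated by $\parameter^{(1)},\dots,\parameter^{(t)}$ before taking the full expectation, so that applying (i) and (ii) inside the conditional expectation is legitimate. The only other subtle point is preserving the factor $\tfrac12$ that condition (iii) buys, since this is precisely what keeps the averaged gradient norm controlled rather than letting the curvature term dominate.
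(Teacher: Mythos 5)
Your proposal is correct, and it is exactly the standard descent-lemma-plus-telescoping argument of Theorem 4.8 in the cited reference \cite{Bottou2018}, which the paper invokes without reproducing a proof (the only paper-specific ingredient is the Lipschitz constant from Remark~\ref{rem:Lipschitz}, which you correctly identify as the smoothness input). All steps check out, including the careful handling of the conditional expectation via the tower property and the factor $\tfrac12$ extracted from the step-size condition $\eta \leq \frac{1}{\mathfrak{L}\mathfrak{M}_G}$, so there is nothing to add.
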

With this theorem one can find the rate of decay of the gradient as a measure of convergence to local minima. Specifically, we can prove the following result by setting $\eta=\frac{\epsilon}{\mathfrak{L}\mathfrak{M}}$.
\begin{corollary}
    In the setting for Theorem \ref{thm:convergence}, the average norm of the gradient decays as
    $\epsilon +   \order{\frac{\mathfrak{L}\mathfrak{M}}{T\epsilon}}.$
\end{corollary}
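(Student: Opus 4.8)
The plan is to realize QSGD as a concrete instance of the generic non-convex SGD scheme of Theorem \ref{thm:non convex SGD} and then read off the constants. I would take the stochastic subgradient to be the shadow gradient $g_t := \widehat{\nabla \Loss}(\parameter^{(t)}, \hat{\rho}^{out}_t, y_t)$ produced at iteration $t$, and verify the three hypotheses of that theorem in turn: unbiasedness, a second-moment bound of the form $\EE[\norm{g_t}_2^2] \le \mathfrak{M} + \mathfrak{M}_G \norm{\nabla\Loss(\parameter^{(t)})}_2^2$, and the step-size constraint $0 < \eta \le 1/(\mathfrak{L}\mathfrak{M}_G)$. Since the Lipschitz constant $\mathfrak{L} = 2\sqrt{p}\,l_{\max}$ is already supplied by Remark \ref{rem:Lipschitz}, the entire argument reduces to pinning down $\mathfrak{M}$ and $\mathfrak{M}_G$.

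For unbiasedness I would compose Lemma \ref{lem:gradient unbiased} with Theorem \ref{thm:shadow unbiased}. Lemma \ref{lem:gradient unbiased} shows that the Hadamard-test readout $g_l = -2(-1)^z \ell(y,\hat{y})$ is an unbiased estimate of $\partial \Loss/\partial a_l$ when fed the true state, while Theorem \ref{thm:shadow unbiased} shows that replacing the true state by its classical shadow preserves the expectation of any $k$-local observable. Because each partial derivative in Lemma \ref{lem:loss derivative} is linear in $\rho_l^{out}$ and, by the product structure \eqref{eq:ansatz U_l}, is governed by the $k$-local Pauli string $\sigma^{\bfs_l}$, the effective observable per component is $k$-local. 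Taking iterated expectations over the sample draw and the shadow randomness via the tower property then yields $\EE[g_t] = \nabla\Loss(\parameter^{(t)})$, establishing hypothesis (i).

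The heart of the proof is the second-moment bound. I would write $\EE[\norm{g_t}_2^2] = \sum_{l=1}^p \EE[\hat{g}_l^2] = \sum_{l=1}^p \var(\hat{g}_l) + \norm{\nabla\Loss(\parameter^{(t)})}_2^2$, which immediately gives $\mathfrak{M}_G = 1$ and $\mathfrak{M} = \sum_l \var(\hat{g}_l)$. The variance of each shadow component is controlled by Theorem \ref{thm:shadow unbiased}: with a single shadow per iteration (i.e.\ $n=1$ in its variance bound) and the effective observable bounded in operator norm by $O(l_{\max})$, each term is $O(3^k l_{\max}^2)$, so summing the $p$ components gives $\mathfrak{M} = O(p\,3^k l_{\max}^2)$. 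This is precisely where the locality factor $3^k$ enters and where QSGD gains over the alternatives. Feeding $\mathfrak{L} = 2\sqrt{p}\,l_{\max}$ and this $\mathfrak{M}$ into Theorem \ref{thm:non convex SGD}, the product $\mathfrak{L}\mathfrak{M} = O(p^{3/2} 3^k l_{\max}^3)$; factoring out $l_{\max}^3$ identifies the method constant $C = 2p^{3/2} 3^k$. Choosing $\eta = \epsilon/C$ makes the leading term equal to $\epsilon$, leaves the residual $C\,l_{\max}^3 (\Loss(\parameter^{(1)}) - \Loss(\parameter^{*}))/(n\epsilon)$ with $T = n$, and I would check that $\eta = \epsilon/C$ satisfies $\eta \le 1/(\mathfrak{L}\mathfrak{M}_G)$ for $\epsilon \in (0,1)$.

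Finally, for the RQSGD and PSR comparison I would repeat only the second-moment computation, since $\mathfrak{L}$ and the outer SGD bound are unchanged. For RQSGD the single-coordinate update lives in one of $p$ directions chosen uniformly at random, so the rescaling by $p$ needed for unbiasedness inflates the second moment to $\mathfrak{M} = O(p^2 l_{\max}^2)$, giving $C = p^{5/2}$; for PSR each of the $p$ components is estimated from $m$ fresh shots, so $T = n/(mp)$ and a parallel bound yields $C = 4p^{5/2}$. The main obstacle I anticipate is the second-moment step: correctly isolating the effective $k$-local observable for each partial derivative from the commutator in Lemma \ref{lem:loss derivative}, bounding its operator norm in terms of $l_{\max}$, and ensuring the variance carries only the locality factor $3^k$ rather than a factor exponential in the full qubit count $d$ --- together with re-deriving the analogous coordinate-selection and multi-shot variance bounds that produce the $p^{5/2}$ scalings for the competing methods.
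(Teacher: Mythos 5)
Your proposal is correct and follows essentially the same route as the paper: the corollary is obtained exactly as you describe, by substituting the fixed step size $\eta = \epsilon/(\mathfrak{L}\mathfrak{M})$ (equivalently $\epsilon/C$ after factoring $l_{\max}^3$ out of $\mathfrak{L}\mathfrak{M}$) into the generic non-convex SGD bound of Theorem \ref{thm:non convex SGD}, which turns the first term into $\epsilon$ and the second into $O\big(\mathfrak{L}\mathfrak{M}/(T\epsilon)\big)$. Be aware that most of your write-up --- verifying unbiasedness via Lemma \ref{lem:gradient unbiased} and Theorem \ref{thm:shadow unbiased}, and deriving the second-moment constants $\mathfrak{M}$ for QSGD, RQSGD, and the parameter-shift rule --- reproduces the paper's subsequent method-by-method proof of Theorem \ref{thm:convergence}; the corollary itself is proved in the paper by this one-line substitution alone.
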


$\bullet$ \textbf{Parameter Shift Rule:} For this strategy 
$ 
 \frac{\mathfrak{L}~\mathfrak{M}}{T} = \frac{4p^{5/2}}{n} {l}^3_{\max}.
$
The argument is as follows. As described in Section \ref{sec:gradient optimizers}, with $n$ samples this method runs for $T=\frac{n}{m p}$ iterations, where $m$ is the number of measurement shots per parameter per iteration. Moreover, the factor $\mathfrak{M}$ as in Theorem \ref{thm:convergence} equals to $\mathfrak{M} = \frac{4p ~{l}^2_{\max}}{m}$ because of the followign result.  
\begin{lemma}
    The sub-gradients $g_t$ in the parameter shift rule satisfy 
   \[\EE[\norm{g_t}_2^2] \leq \frac{4p ~{l}^2_{\max}}{m} + (1- \frac{1}{m}) \norm{\nabla\Loss(\parameter^{(t)})}_2^2.\]
\end{lemma}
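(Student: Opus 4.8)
The plan is to decompose the second moment of the stochastic gradient coordinate-by-coordinate into a variance term and a squared-mean term, exploiting that each partial derivative in the parameter-shift rule is obtained by averaging $m$ independent single-shot estimates. First I would recall the structure of the estimator: for each coordinate $j \in [p]$ the method draws $m$ fresh samples and measures the shifted circuit, producing single-shot values $X_{j,1}, \ldots, X_{j,m}$ that are i.i.d., unbiased for the partial derivative $\mu_j := \partial \Loss / \partial a_j$ (by the same Hadamard-test argument as in Lemma \ref{lem:gradient unbiased}), and bounded by $|X_{j,r}| \le 2 l_{\max}$ since each single-shot readout contributes a quantity of the form $-2(-1)^z \ell(y,\hat y)$ with $\ell \le l_{\max}$. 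The coordinate estimate is then the empirical mean $g_{t,j} = \frac{1}{m}\sum_{r=1}^m X_{j,r}$.

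Next I would apply the bias--variance identity $\EE[g_{t,j}^2] = \var(g_{t,j}) + \mu_j^2$. Independence of the $m$ shots gives $\var(g_{t,j}) = \frac{1}{m}\var(X_{j,1})$, and boundedness gives $\var(X_{j,1}) = \EE[X_{j,1}^2] - \mu_j^2 \le 4 l_{\max}^2 - \mu_j^2$. Combining these,
\begin{align*}
\EE[g_{t,j}^2] \le \frac{1}{m}\left(4 l_{\max}^2 - \mu_j^2\right) + \mu_j^2 = \frac{4 l_{\max}^2}{m} + \left(1 - \tfrac{1}{m}\right)\mu_j^2.
\end{align*}
Summing over all $p$ coordinates and recognizing $\sum_{j} \mu_j^2 = \norm{\nabla \Loss(\parameter^{(t)})}_2^2$ yields the claimed bound.

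The step I expect to be the main obstacle is carefully justifying the two structural inputs---unbiasedness and the uniform bound $|X_{j,r}| \le 2 l_{\max}$---for the single-shot parameter-shift estimator, since everything after that is the routine i.i.d.-averaging calculation. Unbiasedness transfers directly from the Hadamard-test analysis already carried out, and the bound follows because the bounded loss is capped by $l_{\max}$ while the shift rule combines two such readouts. One subtlety worth flagging is that the shots must be independent across $r$ within a coordinate; independence \emph{across} coordinates $j$ is \emph{not} needed, because we only require the sum of the individual second moments $\sum_{j} \EE[g_{t,j}^2]$, for which linearity of expectation suffices and no cross-coordinate covariance appears.
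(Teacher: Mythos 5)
Your proposal is correct and follows essentially the same route as the paper's proof: a coordinate-wise bias--variance decomposition $\EE[g_{t,j}^2] = \var(g_{t,j}) + \mu_j^2$, the $\tfrac{1}{m}$ variance reduction from i.i.d.\ shots, the boundedness estimate $\var(X) \le 4l_{\max}^2 - \mu_j^2$, and summation over the $p$ coordinates via linearity of expectation. The only (immaterial) difference is cosmetic: the paper models each single-shot estimate as a difference $x_j^+ - x_j^-$ of two loss readouts each bounded by $l_{\max}$, whereas you model it as a single Hadamard-test readout of the form $-2(-1)^z\ell(y,\hat y)$; both give the same uniform bound $2l_{\max}$ and hence the identical conclusion.
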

\begin{proof}

Note  the sub-gradient $g_t$ is a vector of $g_{t,\bfs_l}$, for all $\bfs_l$ appearing in ansatz formulation as in \eqref{eq:ansatz} and \eqref{eq:ansatz U_l}. In the parameter shift rule, each component is estimated empirically as 
$
g_{t,\bfs_l} = \frac{1}{m} \sum_{j=1}^m (x_j^+-x_j^-),
$
where $x_j^+$ and $x_j^-$ are the loss measurement where $a_{\bfs_l}$ is shifted by $\pm \frac{\pi}{4}$. As the measurements are done independently,  $\var(g_{t,\bfs_l})=\frac{1}{m}\var(X^+-X^-)$, where $X^\pm$ is distributed as each $X_j^\pm$. Since the estimation is unbiased we have that 
\[\var(X^+-X^-)=\EE[(X^+-X^-)^2] - (\pdv{\Loss}{a_{\bfs_l}})^2.\]
Therefore, given that $-l_{\max} \leq X^{\pm} \leq l_{\max}$, we can bound the above term as $4l^2_{\max}  - (\pdv{\Loss}{a_{\bfs_l}})^2.$ This implies that,   $\var(g_{t,\bfs_l})\leq \frac{1}{m}(4l^2_{\max}  - (\pdv{\Loss}{a_{\bfs_l}})^2)$. Hence, 
\begin{align*}
\EE[\norm{g_t}_2^2] &= \sum_{\bfs_l} \EE[|g_{t,\bfs_l}|^2] = \sum_{\bfs_l}\var(g_{t,\bfs_l}) + (\pdv{\Loss}{a_{\bfs_l}})^2\\
&\leq \frac{4p ~{l}^2_{\max}}{m} + (1- \frac{1}{m})\sum_{\bfs_l} (\pdv{\Loss}{a_{\bfs_l}})^2.
\end{align*}
The proof is complete by noting that the summation gives the norm of the gradient. 
    \end{proof}

$\bullet$ \textbf{RQSGD:} The optimality gap is $ \frac{\mathfrak{L}~\mathfrak{M}}{T} = \frac{p^{5/2}}{n}{l}^3_{\max}.$ This is because,  $T=n$ as it is one-shot.  Moreover, the factor $\mathfrak{M}$ is calculated using the following fact:
\begin{remark}
    The sub-gradient for RQSGD satisfies $ \EE[\norm{g_t}_2^2] \leq 4p^2 {l}^2_{\max}.$
\end{remark}
 The remark follows from the fact that the sub gradient is of the form  $g_{t} = p (0,\cdots, 0, g_{t,\bfs_l}, 0, \cdots)$ where the index $\bfs_l$ is selected randomly among all the $p$ parameters. Moreover, as it is argued in \cite{HeidariAAAI2022}, each $g_{t,\bfs_l}$ is computed from a Hadamard test and is of the form $ g_{t,\bfs_l}= -2(-1)^z \ell(y, \hat{y})$, where $x\in \{0,1\}$. Therefore, $\abs{ g_{t,\bfs_l}}\leq 2l_{\max}$. This implies that $ \EE[\norm{g_t}_2^2] = p^2\EE[\abs{ g_{t,\bfs_l}}^2]\leq 4p^2 {l}^2_{\max}$.


$\bullet$ \textbf{QSGD:} Here, $T=n$ as this method is also one-shot. From the same argument as in the proof of Theorem \ref{thm:shadow unbiased}, we can show that 
$
\EE[\norm{g_t}_2^2] \leq 4 p 9^k {l}^2_{\max}, 
$
where $k$ is the ansatz locality. Hence, the optimality gap is
$ \frac{\mathfrak{L}~\mathfrak{M}}{T} = \frac{p^{3/2}9^k}{n}{l}^3_{\max}$.

\section{Accuracy Upper Limit}\label{app:optimal loss}
We compare the performance the gradient-based trainers with the optimal expected loss within each batch. This lower-bound is obtained by Holveo-Holstrom theorem \citep{Holevo2012}. More precisely it is given in the following lemma which was proved in \cite{HeidariAAAI2022}.

\begin{lemma}\label{lem:optimal loss}
Given a set of $m$ labeled density operators $\{(\rho_j, y_j)\}_{j=1}^m$ with $y_j \in \pmm$, the minimum of the expected sample loss averaged over the $m$ samples equals:
$\frac{1}{2}\Big(1-\Big\| \frac{1}{m} \sum_j y_j \rho_j\Big\|_1\Big),$ 
where $\norm{\cdot}_1$ is the trace norm and the minimum is taken over all measurements applied on each $\rho_j$ for predicting $y_j$. 
\end{lemma}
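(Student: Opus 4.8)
The plan is to recognize this as the minimum-error quantum state discrimination (Holevo--Helstrom) problem for the two ensembles induced by the labels, and to carry out the standard variational optimization over the discriminating POVM. Since the labels are binary, $y_j\in\pmm$, I would model the predictor as a single two-outcome POVM $\{M_{-1},M_{+1}\}$ with $M_{-1}+M_{+1}=\id$ and $M_{\pm1}\geq 0$, applied to each state. Under the $0$--$1$ loss the expected loss on sample $j$ is simply the probability of the wrong outcome, $\tr{M_{-y_j}\rho_j}$, so the quantity to minimize is $L=\frac{1}{m}\sum_{j}\tr{M_{-y_j}\rho_j}$. First I would group the sum by label and write $L=\tr{M_{-1}\sigma_+}+\tr{M_{+1}\sigma_-}$, where $\sigma_{\pm}:=\frac{1}{m}\sum_{j:y_j=\pm1}\rho_j$ are the subnormalized class mixtures, satisfying $\tr{\sigma_+}+\tr{\sigma_-}=1$.

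Next I would eliminate one operator using the completeness relation: substituting $M_{+1}=\id-M_{-1}$ gives $L=\tr{\sigma_-}+\tr{M_{-1}(\sigma_+-\sigma_-)}$. Setting $\Delta:=\sigma_+-\sigma_-=\frac{1}{m}\sum_j y_j\rho_j$, which is exactly the Hermitian operator appearing in the claimed formula, reduces the problem to minimizing the single linear functional $\tr{M_{-1}\Delta}$ over all $M_{-1}$ with $0\leq M_{-1}\leq\id$, with the additive constant $\tr{\sigma_-}$ fixed.

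The crux of the argument — and the step I expect to require the most care — is the spectral optimization $\min_{0\leq M\leq\id}\tr{M\Delta}$. I would diagonalize $\Delta=\sum_i\lambda_i\ketbra{v_i}$ and take $M_{-1}$ to be the spectral projector onto the negative eigenspace of $\Delta$ (predict $-1$ exactly where $\Delta$ is negative); this is feasible and yields $\tr{M_{-1}\Delta}=\sum_{\lambda_i<0}\lambda_i=\tfrac12(\tr{\Delta}-\norm{\Delta}_1)$. For the matching lower bound, any feasible $M$ obeys $0\leq\matrixelement{v_i}{M}{v_i}\leq1$, so $\tr{M\Delta}=\sum_i\lambda_i\matrixelement{v_i}{M}{v_i}\geq\sum_{\lambda_i<0}\lambda_i$, establishing optimality. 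Substituting back and using $\tr{\Delta}=\tr{\sigma_+}-\tr{\sigma_-}$ together with $\tr{\sigma_+}+\tr{\sigma_-}=1$, the constant $\tr{\sigma_-}$ combines with $\tfrac12\tr{\Delta}$ to give $\tfrac12$, leaving $\min L=\tfrac12\bigl(1-\norm{\Delta}_1\bigr)=\tfrac12\bigl(1-\norm{\tfrac1m\sum_j y_j\rho_j}_1\bigr)$, as claimed. I would close by noting that this recovers the Holevo--Helstrom bound for distinguishing the two label ensembles with priors set by the empirical class frequencies.
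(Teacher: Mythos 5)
Your proof is correct and takes the same route the paper itself relies on: the paper gives no self-contained proof of this lemma, attributing it to the Holevo--Helstrom theorem \citep{Holevo2012} as proved in \citep{HeidariAAAI2022}, and your argument---grouping the average loss into the class mixtures $\sigma_\pm$, reducing to $\min_{0\leq M\leq \id}\tr\{M\Delta\}$ with $\Delta=\frac{1}{m}\sum_j y_j\rho_j$, and achieving the minimum $\frac{1}{2}(\tr\{\Delta\}-\norm{\Delta}_1)$ via the negative spectral projector with a matching eigenbasis lower bound---is precisely the standard derivation of that theorem. The only point worth making explicit, which you already do, is the interpretation of the lemma as a single label-independent two-outcome POVM applied to every sample (letting the measurement depend on $j$ would make the minimum trivially zero), together with the harmless reduction of any many-outcome predictor to a two-outcome POVM by coarse-graining.
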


\section{Convergence for strongly-convex loss}\label{app:strongly convex}
 To provide an insight into the rate of convergence of our proposed work as compared to existing approaches such as the parameter-shift rule and RQSGD, we consider the strong-convexity assumption for all methods. Strong-convexity is satisfied if there exists a constant $\alpha>0$ such that for all $\parameter$ and $\parameter'$:
\begin{align*}
\Loss(\parameter')\geq \Loss(\parameter) + \nabla\Loss(\parameter)^T(\parameter'-\parameter) + \frac{\alpha}{2} \norm{\parameter'-\parameter}_2^2.
\end{align*}
We present our convergence guarantees in the following, where the proof can be found in \cite{Bottou2018}. 
\begin{theorem}\label{thm:convergence stongly convex}
Under the strong-convexity assumption for $\Loss(\parameter)$, suppose that a SGD method with an update rule $\parameter^{(t+1)} = \parameter^{(t)} - \eta_t g_t,$ where the step size is $\eta_t=\frac{\beta}{\gamma+t}$ with constants $\beta >\frac{1}{\alpha}, \gamma>0$, and $\EE[\norm{g_t}_2^2]\leq \mathfrak{M} + \mathfrak{M}_G \norm{\nabla\Loss(\parameter^{(t)})}_2^2$ for some constants $\mathfrak{M}, \mathfrak{M}_G$. Then, 
\begin{align*}
\EE[\Loss(\parameter^{(T)})] - \Loss(\parameter^*) \leq \frac{\nu}{\gamma+T},
\end{align*}
where $T$ is the number of iterations, and 
\begin{align*}
\nu :=\max\big\{\frac{\beta^2 \mathfrak{L}~ \mathfrak{M}}{2(\beta \alpha-1)}, (\gamma+1)(\Loss(\parameter^{(1)})-\Loss(\parameter^*))\big\}.
\end{align*}
\end{theorem}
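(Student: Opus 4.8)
I read the final statement as Theorem~\ref{thm:convergence stongly convex}, the strongly-convex guarantee, whose proof is only cited to \cite{Bottou2018}. The plan is to run the standard descent-lemma analysis for a stochastic gradient method on a smooth, strongly convex objective and then close a one-step recursion on the optimality gap $b_t := \EE[\Loss(\parameter^{(t)})]-\Loss(\parameter^{*})$ by induction. First I would invoke $\mathfrak{L}$-smoothness (the same constant that Theorem~\ref{thm:non convex SGD} uses, quantified for this loss in Remark~\ref{rem:Lipschitz}) to get the quadratic upper bound
\begin{align*}
\Loss(\parameter^{(t+1)}) \leq \Loss(\parameter^{(t)}) - \eta_t \nabla\Loss(\parameter^{(t)})^T g_t + \tfrac{\mathfrak{L}}{2}\eta_t^2 \norm{g_t}_2^2.
\end{align*}
Taking the conditional expectation given $\parameter^{(t)}$ and using unbiasedness $\EE[g_t\mid\parameter^{(t)}]=\nabla\Loss(\parameter^{(t)})$ together with the hypothesis $\EE[\norm{g_t}_2^2]\leq \mathfrak{M}+\mathfrak{M}_G\norm{\nabla\Loss(\parameter^{(t)})}_2^2$ leaves the gradient-norm term with coefficient $-\eta_t+\tfrac{\mathfrak{L}}{2}\mathfrak{M}_G\eta_t^2$ and an additive noise term $\tfrac{\mathfrak{L}\mathfrak{M}}{2}\eta_t^2$.

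Next I would bring in the two hallmark consequences of $\alpha$-strong convexity. The first is the Polyak--\L ojasiewicz inequality $\norm{\nabla\Loss(\parameter)}_2^2 \geq 2\alpha(\Loss(\parameter)-\Loss(\parameter^{*}))$, which converts the negative gradient-norm term into a genuine contraction on $b_t$. The second is that the schedule $\eta_t=\beta/(\gamma+t)$ with $\beta>1/\alpha$ and $\gamma$ chosen large enough that $\eta_1\leq 1/(\mathfrak{L}\mathfrak{M}_G)$ keeps $\tfrac{\mathfrak{L}}{2}\mathfrak{M}_G\eta_t\leq\tfrac12$ for every $t$, so the coefficient above is at most $-\eta_t/2$. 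Combining these yields the clean recursion
\begin{align*}
b_{t+1} \leq (1-\alpha\eta_t)\, b_t + \tfrac{\mathfrak{L}\mathfrak{M}}{2}\,\eta_t^2.
\end{align*}

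I would then prove $b_t\leq \nu/(\gamma+t)$ by induction on $t$. The base case $t=1$ is exactly the second argument of the maximum defining $\nu$. For the inductive step, writing $\hat t=\gamma+t$ and substituting $\eta_t=\beta/\hat t$, the recursion gives $b_{t+1}\leq \tfrac{(\hat t-\alpha\beta)\nu}{\hat t^2}+\tfrac{\mathfrak{L}\mathfrak{M}\beta^2}{2\hat t^2}$, which I regroup as $\tfrac{\hat t-1}{\hat t^2}\nu-\tfrac{1}{\hat t^2}\big((\alpha\beta-1)\nu-\tfrac{\mathfrak{L}\mathfrak{M}\beta^2}{2}\big)$. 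The first argument of the maximum forces $(\alpha\beta-1)\nu\geq \tfrac{\mathfrak{L}\mathfrak{M}\beta^2}{2}$, so the subtracted quantity is nonnegative and $b_{t+1}\leq \tfrac{(\hat t-1)\nu}{\hat t^2}\leq \tfrac{\nu}{\hat t+1}=\tfrac{\nu}{\gamma+(t+1)}$, the last step using $(\hat t-1)(\hat t+1)\leq \hat t^2$. Setting $t=T$ gives the claim.

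The main obstacle is not any single line but making the three ingredients interlock: one must verify that $\gamma$ keeps $\eta_t\leq 1/(\mathfrak{L}\mathfrak{M}_G)$ for all $t$ (so the descent coefficient never turns favorable-to-the-noise), and one must choose $\nu$ precisely so that it simultaneously dominates the initial gap and satisfies $\nu\geq \mathfrak{L}\mathfrak{M}\beta^2/(2(\beta\alpha-1))$ --- the two arguments of the maximum --- which is exactly what makes the induction self-sustaining. The condition $\beta\alpha>1$ is the crux: it is what guarantees the contraction factor $(1-\alpha\eta_t)$ outpaces the $O(\eta_t^2)$ noise accumulation, so weakening strong convexity or smoothness would break the $O(1/T)$ rate.
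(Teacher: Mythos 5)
Your proof is correct and follows essentially the same route as the paper's source: the paper gives no proof of its own but defers to \cite{Bottou2018}, whose standard argument (descent lemma from smoothness, the second-moment bound, the Polyak--\L ojasiewicz consequence of strong convexity, and induction on the gap $b_t$ under the schedule $\eta_t=\beta/(\gamma+t)$) is exactly what you reconstructed, including the role of $\beta\alpha>1$ in making $\nu$ dominate both arguments of the maximum. One remark: your added requirement that $\gamma$ be large enough that $\eta_1 \le 1/(\mathfrak{L}\mathfrak{M}_G)$ is genuinely needed for the contraction step, and the paper's statement, which assumes only $\gamma>0$, silently omits this hypothesis of the cited theorem --- so your version is in fact the more careful one.
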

As noted in the classical setting, only the first term in $\nu$ matters when an appropriate initialization is used in the update rule. That brings us to the following quantity $\frac{\beta^2 \mathfrak{L}~ \mathfrak{M}}{2(\beta \alpha-1)(\gamma+T)}$ that behaves as the optimality gap in \eqref{eq:optimality gap} if we ignore the constants $\alpha, \beta$ and $\gamma$. Hence, we immediately conclude that QSGD has a faster convergence rate for a local ansatz.

\section{Implementation Details} \label{app:imp details}
The QNN framework is implemented using Pytorch with CUDA support to speed up learning using GPUs. The experiments are simulated in a classical computer.

\textbf{Training Details}. We had three hyper parameters to tune for when training the learning algorithm for the VQA's. To guarantee same sample utilization among above-mentioned three gradient based methods, the sample size was fixed for each of the experiments. We used the sample size of $57600$ for Exp. $1$ and Exp. $2$, and $61440$ for Exp. $3$. These sample sizes were determined in order to deal with varying numbers of optimizable parameters while ensuring an equal distribution of samples. The epoch ($300$ for Exp. $1$ and Exp. $2$; $60$ for Exp. $3$) and batchsize ($192$ for Exp. $1$ and Exp. $2$; $1024$ for Exp. $3$) hyperparameters, which determines how many times the learning algorithm will run through the entire training dataset and the number of samples to work through per iteration respectively were derived from the adopted fixed sample sizes. The most significant hyperparameter, step size, is optimized using the Grid search approach and leveraging the binary search algorithm concept for each experiment and underlying methods distinctly.

\textbf{Testing Measures.} We have generated a validation set of 2000 samples to test and compare the accuracy of all the training procedures for three different type of dataset and two distinctive architectural setup of quantum perceptrons(neurons).


\end{document}